\newtheorem{thm}{Theorem}[section]
\newtheorem{lemma}[thm]{Lemma}
\newtheorem{prop}[thm]{Proposition}
\newtheorem{corollary}[thm]{Corollary}
\newtheorem{defi}[thm]{Definition}
\newtheorem{conj}[thm]{Conjecture}
\newtheorem{claim}[thm]{Claim}
\newtheorem{notation}[thm]{Notation}
\begin{document}
\title{The complexity of recognizing minimally tough graphs}

\author[1,2]{Gyula Y Katona\thanks{kiskat@cs.bme.hu}}
\author[3]{Istv\'an Kov\'acs\thanks{istvan.kovacs@shapr3d.com}}
\author[1,4]{Kitti Varga\thanks{vkitti@cs.bme.hu}}
\affil[1]{Department of Computer Science and Information Theory, Budapest University of Technology and Economics, Hungary}
\affil[2]{MTA-ELTE Numerical Analysis and Large Networks Research Group, Hungary}
\affil[3]{Shapr3D, Hungary}
\affil[4]{HAS Alfr\'ed R\'enyi Institute of Mathematics, Hungary}
\maketitle

\begin{abstract}
 A graph is called \mbox{$t$-tough} if the removal of any vertex set~$S$ that disconnects the graph leaves at most $|S|/t$ components. The toughness of a graph is the largest~$t$ for which the graph is \mbox{$t$-tough}. A graph is minimally \mbox{$t$-tough} if the toughness of the graph is~$t$ and the deletion of any edge from the graph decreases the toughness. The complexity class DP is the set of all languages that can be expressed as the intersection of a language in NP and a language in coNP. In this paper, we prove that recognizing minimally \mbox{$t$-tough} graphs is DP-complete for any positive rational number~$t$. We introduce a new notion called weighted toughness, which has a key role in our proof.
\end{abstract}

\section{Introduction}

All graphs considered in this paper are finite, simple and undirected. Let $\omega(G)$ denote the number of components and $\alpha(G)$ denote the independence number of a graph~$G$. (Using $\omega(G)$ to denote the number of components may be confusing, however, most of the literature on toughness uses this notation.) For a graph~$G$ and a vertex set $V' \subseteq V(G)$, let $G[V']$ denote the subgraph of~$G$ induced by~$V'$. For a connected graph $G$, a vertex set $S \subseteq V(G)$ is called a cutset if its removal disconnects the graph.

The notion of toughness was introduced by Chv\'atal~\cite{toughness_intro}.

\begin{defi}
 Let $t$ be a real number. A graph~$G$ is called {\em \mbox{$t$-tough}} if $|S| \ge t \omega(G-S)$ holds for any vertex set $S \subseteq V(G)$ that disconnects the graph (i.e. for any $S \subseteq V(G)$ with $\omega(G-S)>1$). The \emph{toughness} of~$G$, denoted by $\tau(G)$, is the largest~$t$ for which~$G$ is \mbox{$t$-tough}, taking $\tau(K_n) = \infty$ for all $n \ge 1$.
 
 We say that a cutset $S \subseteq V(G)$ is a {\em tough set} if $\omega(G - S) = |S|/\tau(G)$.
\end{defi}

It follows directly from the definition of toughness that every \mbox{$t$-tough} noncomplete graph is \mbox{$2t$-}con\-nected; therefore, the minimum degree of any \mbox{$t$-tough} noncomplete graph is at least~$\lceil 2t \rceil$.

Clearly, the more edges a graph has, the larger its connectivity can be, so the graphs whose toughness decreases whenever one of their edges are removed might have some interesting properties.

\begin{defi}
 A graph~$G$ is \emph{minimally \mbox{$t$-tough}} if $\tau(G) = t$ and $\tau(G - e) < t$ for all $e \in E(G)$.
\end{defi}

The motivation for our research is the following conjecture.

\begin{conj}[Kriesell \cite{kriesell}] \label{kriesell}
 Every minimally $1$-tough graph has a vertex of degree~$2$.
\end{conj}

The above conjecture can be naturally generalized to any positive rational number~$t$ as follows: every minimally \mbox{$t$-tough} graph has a vertex of degree~$\lceil 2t \rceil$. Note that this conjecture is an analogue of a theorem of Mader stating that for any positive integer~$k$, every minimally \mbox{$k$-connected} graph has a vertex of degree~$k$, see~\cite{ende}.

Kriesell's conjecture is still open, but in~\cite{min1tough_article} we presented some related results, in particular that in the class of \mbox{claw-free} graphs the conjecture is true in a very strong sense, namely, the only minimally 1-tough, \mbox{claw-free} graphs are cycles of length at least four. On the other hand, we also proved that the class of minimally \mbox{$t$-tough} graphs is large for any positive rational number~$t$: any graph can be embedded as an induced subgraph into a minimally \mbox{$t$-tough} graph. 

Therefore it is natural to ask, how ``large'' the set of minimally \mbox{$t$-tough} graphs is for different $t$ values and for various graph classes. In the present paper we investigate the first question from a complexity theoretical viewpoint. Similar results for the second question are presented in~\cite{spec_graphclasses}.

Let $t$ be an arbitrary positive rational number and consider the following problem.

\medskip
\noindent {\bf \scshape $\boldsymbol{t}$-Tough} \\
\textit{Instance:} a graph~$G$. \\
\textit{Question:} is it true that $\tau(G) \ge t$?
\medskip

Note that in this problem, $t$ is not part of the input.

It is easy to see that for any positive rational number~$t$, the problem {\scshape \mbox{$t$-Tough}} is in coNP: a witness is a vertex set S whose removal disconnects the graph and leaves more than $|S|/t$ components. By reducing a variant of the independent set problem to the complement of {\scshape \mbox{$t$-Tough}}, Bauer et al.~proved the following.

\begin{thm}[\cite{recognize_toughness}] \label{t_tough_conp_complete}
 For any positive rational number~$t$, the problem {\scshape \mbox{$t$-Tough}} is coNP-complete.
\end{thm}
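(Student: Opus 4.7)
The plan is to show two things: that $t$-Tough lies in coNP, and that it is coNP-hard.

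For membership in coNP, I would observe that non-$t$-toughness has a short, polynomially checkable certificate. If $\tau(G) < t$, then by definition there exists a cutset $S \subseteq V(G)$ with $\omega(G - S) > |S|/t$. Given $G$ and $S$, both $|S|$ and $\omega(G - S)$ (the latter computable by a depth-first search on $G - S$) can be evaluated in polynomial time, so the strict inequality is checkable efficiently. Hence the complement of $t$-Tough lies in NP, placing $t$-Tough in coNP.

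For coNP-hardness, I would polynomial-time reduce an NP-complete problem to the complement of $t$-Tough. A natural choice is \textsc{Independent Set}: given $(H, k)$, construct a graph $G = G_{H, k, t}$ in polynomial time such that $\tau(G) < t$ iff $H$ has an independent set of size $\geq k$. A clean starting point is the join $G = H + K_m$ for an integer $m$ chosen in terms of $n = |V(H)|$, $k$, and $t$. Since every vertex of $K_m$ is universal in $G$, any cutset of $G$ leaving more than one component must contain all of $V(K_m)$. Writing $S = V(K_m) \cup U$ with $U \subseteq V(H)$, the toughness violation $\omega(G - S) > |S|/t$ becomes $\omega(H - U) \cdot t > m + |U|$. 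Picking one vertex per component of $H - U$ yields an independent set of $H$ of size $\omega(H - U)$, so any toughness-violating $U$ witnesses an independent set of size exceeding $(m + |U|)/t$. Conversely, an independent set $I$ of size $\geq k$ gives a witness via $S = V(K_m) \cup (V(H) \setminus I)$, which disconnects $G$ into $|I|$ components and achieves $|I|(t+1) > m + n$ provided $m < k(t+1) - n$.

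The main obstacle is calibrating the construction so that this correspondence is tight in both directions for \emph{arbitrary} rational $t$ and arbitrary inputs $(H, k)$. A plain join imposes a stringent relationship between $n$, $k$, and $t$---essentially $n$ close to $k + t$---so for general $(H, k)$ one must augment the construction, for instance by attaching a small clique gadget to each vertex of $H$ or by taking several disjoint copies of $H$ before joining with $K_m$, in such a way that the toughness loss caused by an independent set of size $k$ dominates any spurious contribution from cuts not arising from large independent sets. The gadget sizes (and the multiplicity of copies) must remain polynomial in $|H|$ and the bit length of $t$; once the parameters are fixed, the technical heart of the proof is a case analysis over all potential cutsets of $G$---including those not containing all of $V(K_m)$ (harmless since $K_m$ vertices attach to everything) and those for which $U$ is not the complement of an independent set (which produce fewer components per vertex removed)---verifying that $\tau(G) < t$ occurs precisely when $\alpha(H) \geq k$.
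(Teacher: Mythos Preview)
The paper does not prove this theorem at all: it is stated with a citation to Bauer, Hakimi, and Schmeichel~\cite{recognize_toughness} and used as background. There is therefore no proof in the paper to compare your attempt against.

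As for the attempt itself, the coNP-membership part is fine, and the hardness outline is in the right spirit---the original Bauer--Hakimi--Schmeichel argument also reduces from \textsc{Independent Set} via a join-type construction---but what you have written is an outline rather than a proof. The final paragraph openly defers the crucial step: you note that the plain join $H + K_m$ only works when $n$, $k$, and $t$ satisfy a tight relation, and you gesture at ``attaching a small clique gadget'' or ``taking several disjoint copies'' without specifying the gadget, the parameters, or carrying out the case analysis you say is required. For a theorem whose entire content is that this calibration can be made to work for \emph{every} positive rational $t$, leaving that calibration unspecified is leaving out the proof. If you want a self-contained argument, you need to actually exhibit the construction (for instance, first padding $H$ with isolated vertices or blowing up vertices so that $\alpha(H) \ge k$ corresponds exactly to a ratio crossing $t$), fix $m$ explicitly, and then verify both directions of the equivalence $\alpha(H) \ge k \iff \tau(G) < t$ against all possible cutsets.
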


However, in some graph classes the toughness can be computed in polynomial time, for instance, in the class of split graphs.

\begin{thm}[\cite{split_general}] \label{split_general_thm}
 For any rational number $t > 0$, the class of \mbox{$t$-tough} split graphs can be recognized in polynomial time.
\end{thm}

The focus of our investigation is on the critical version of the problem {\scshape \mbox{$t$-Tough}}. Let $t$ be an arbitrary positive rational number and consider the following problem.

\medskip
\noindent {\bf \scshape Min-$\boldsymbol{t}$-Tough} \\
\textit{Instance:} a graph~$G$.\\
\textit{Question:} is it true that $G$ is minimally \mbox{$t$-tough}?
\medskip

Since extremal problems usually seem not to belong to $\text{NP} \cup \text{coNP}$, the complexity class called DP was introduced by Papadimitriou and Yannakakis in~\cite{dp_intro}.

\begin{defi}
 A language~$L$ is in the class \emph{DP} if there exist two languages $L_1 \in \text{NP}$ and $L_2 \in \text{coNP}$ such that $L = L_1 \cap L_2$.
 
 A language is called \emph{DP-hard} if all problems in DP can be reduced to it in polynomial time. A language is \emph{DP-complete} if it is in DP and it is DP-hard.
\end{defi}

It should be emphasized that $\text{DP} \ne \text{NP} \cap \text{coNP}$ if $\text{NP} \ne \text{coNP}$. Moreover, $\text{NP} \cup \text{coNP} \subseteq \text{DP}$. 

To prove that {\scshape Min-\mbox{$t$-Tough}} is DP-complete for any positive rational number~$t$, we use the following problem for reduction.

\medskip
\noindent {\bf \scshape $\boldsymbol{\alpha}$-Critical} \\
\textit{Instance:} a graph~$G$ and a positive integer~$k$. \\
\textit{Question:} is it true that $\alpha(G) < k$, but $\alpha(G-e) \ge k$ for any edge $e \in E(G)$?
\medskip

Note that, unlike in {\scshape \mbox{$t$-tough}} or {\scshape Min-\mbox{$t$-Tough}}, in this problem $k$ is part of the input. The DP-completeness of the problem {\scshape $\alpha$-Critical} is a trivial consequence of the following theorem.

\begin{thm}[\cite{crit_clique}]
 The following problem is DP-complete.
 
 \medskip
 \noindent {\bf \scshape CriticalClique} \\
 \textit{Instance:} a graph~$G$ and a positive integer~$k$. \\
 \textit{Question:} is it true that $G$ has no clique of size~$k$, but adding any missing edge~$e$ to~$G$, the resulting graph $G+e$ has a clique of size~$k$?
\end{thm}

\begin{corollary} \label{alpha_crit_dp_complete}
 The problem {\scshape $\alpha$-Critical} is DP-complete.
\end{corollary}

Our main result is the following.

\begin{thm} \label{minttough_dp_complete}
 The problem {\scshape Min-\mbox{$t$-Tough}} is DP-complete for any positive rational number~$t$.
\end{thm}

Note that since the toughness of any noncomplete graph is a rational number, there exist no minimally tough graphs with irrational toughness.

To prove the case $t \ge 1$, we introduce a new notion called weighted toughness. However, we believe that this might be an interesting idea on its own.

\begin{defi}
 Let $t$ be a positve real number. Given a graph~$G$ and a positive weight function~$w$ on its vertices, we say that the graph~$G$ is weighted \mbox{$t$-tough} with respect to the weight function~$w$ if
  \[ \omega(G-S) \le \frac{w(S)}{t} \]
 holds for any vertex set $S \subseteq V(G)$ whose removal disconnects the graph, where 
  \[ w(S) = \sum_{v \in S} w(v) \text{.} \]
 The weighted toughness of a noncomplete graph (with respect to the weight function~$w$) is the largest~$t$ for which the graph is weighted \mbox{$t$-tough}, and we define the weighted toughness of complete graphs (with respect to~$w$) to be infinity. %We say that the graph~$G$ is minimally weighted \mbox{$t$-tough} (with respect to $w$) if its weighted toughness is $t$ and the removal of any of its edges results in a graph whose weighted toughness is strictly less than $t$.
\end{defi}

Note that the weighted toughness of a graph with respect to the weight function that assigns~1 to every vertex is the toughness of the graph.

The paper is organized as follows. Section~\ref{preliminaries} gathers the properties of minimally tough graphs and \mbox{$\alpha$-critical} graphs that are needed to prove Theorem~\ref{minttough_dp_complete}. Since the proof of this theorem is fairly complicated, Section~\ref{section_special_cases} discusses some of its special cases in the hope of fostering a better understanding. The proof of Theorem~\ref{minttough_dp_complete} considers three cases: when $1/2 < t < 1$, when $t \ge 1$, and when $t \le 1/2$; they are proved in Sections~\ref{section_1/2<t<1}, \ref{section_t>=1} and~\ref{section_t<=1/2}, respectively.

\section{Preliminaries} \label{preliminaries}

In this section we collect some useful properties of minimally tough graphs and \mbox{$\alpha$-critical} graphs.

\subsection{Minimally tough graphs}

\begin{prop} \label{obs_below1}
 Let $t \le 1$ be a positive rational number and $G$ a graph with $\tau(G) = t$. Then 
  \[ \omega(G-S) \le \frac{|S|}{t} \]
 for any nonempty proper subset~$S$ of~$V(G)$.
\end{prop}

\begin{proof}
 If $S$ is a cutset in~$G$, then by the definition of toughness $\omega(G-S) \le |S|/t$ holds.

 If $S$ is not a cutset in~$G$, then $\omega(G-S) = 1$ (since $S \ne V(G)$). On the other hand, $|S|/t \ge 1$ since $S \ne \emptyset$ and $t \le 1$. Therefore $\omega(G-S) \le |S|/t$ holds in this case as well.
\end{proof}

As is clear from its proof, the above proposition holds even if $S$ is not a cutset. However, it does not hold if $t > 1$ and $S$ is not a cutset: if $t>1$, then the graph cannot contain a cut-vertex; therefore $\omega(G-S) = 1$ for any subset~$S$ with $|S| = 1$, while $|S|/t = 1/t < 1$.

\begin{prop} \label{possible_values_of_toughness}
 Let $G$ be a connected noncomplete graph on $n$~vertices. Then $\tau(G)$ is a positive rational number, and if $\tau(G) = a/b$, where $a,b$ are relatively prime positive integers, then $1 \le a,b \le n-1$.
\end{prop}
\begin{proof}
 By definition, 
  \[ \tau(G) = \min_{\substack{S \subseteq V(G) \\ \omega(G-S) \ge 2}} \frac{|S|}{\omega(G-S)} \]
 for a noncomplete graph~$G$. Since $G$ is connected and noncomplete, $1 \le |S| \le n-2$ for every $S \subseteq V(G)$ with $\omega(G-S) \ge 2$. Obviously, $\omega(G-S) \ge 2$, and since $G$ is connected, $\omega(G-S) \le n-1$.
\end{proof}

\begin{corollary} \label{toughness_gap}
 Let $G$ and $H$ be two connected noncomplete graphs on $n$~vertices. If $\tau(G) \ne \tau(H)$, then
  \[ \big| \tau(G) - \tau(H) \big| > \frac{1}{n^2} \text{.} \]
\end{corollary}
\begin{proof}
 %It immediately follows from Proposition~\ref{possible_values_of_toughness}.
 Let $a,b$ and $a',b'$ be two pairs of relative prime positive integers such that $\tau(G) = a/b$ and $\tau(H) = a' / b'$. Proposition~\ref{possible_values_of_toughness} implies that $1 \le a,b,a',b' \le n-1$. Since $\tau(G) \ne \tau(H)$,
  \[ \big| \tau(G) - \tau(H) \big| = \left| \frac{a}{b} - \frac{a'}{b'} \right| = \left| \frac{ab' - a'b}{bb'} \right| > \frac{1}{n^2} \text{.} \]
\end{proof}

\begin{prop} \label{min_t_tough_DP}
 For every positive rational number~$t$, the problem {\scshape Min-\mbox{$t$-Tough}} belongs to DP.
\end{prop}
\begin{proof}
 For any positive rational number~$t$,
 \begin{gather*}
  \text{{\scshape Min-\mbox{$t$-Tough}}} = \big\{ G \text{ graph} \bigm| \tau(G) = t \text{ and } \tau(G-e) < t \text{ for all } e \in E(G) \big\} \\
  = \big\{ G \text{ graph} \bigm| \tau(G) \ge t \big\} \cap \big\{ G \text{ graph} \bigm| \tau(G) \le t \big\} \\
  \cap \, \big\{ G \text{ graph} \bigm| \tau(G-e) < t \text{ for all } e \in E(G) \big\} \text{.}
 \end{gather*}
 Let
  \[ L_{1,1} = \big\{ G \text{ graph} \bigm| \tau(G-e) < t \text{ for all } e \in E(G) \big\} \text{,} \]
  \[ L_{1,2} = \big\{ G \text{ graph} \bigm| \tau(G) \le t \big\} \]
 and
  \[ L_2 = \big\{ G \text{ graph} \bigm| \tau(G) \ge t \big\} \text{.} \]
 Notice that $L_2 = \text{{\scshape \mbox{$t$-Tough}}}$ and it is known to be in \text{coNP}: if a graph $G$ is not $t$-tough, then a witness is a vertex set $S \subseteq V(G)$ whose removal disconnects~$G$ and leaves more than $|S|/t$ components. Similarly, $L_{1,1} \in \text{NP}$, since a witness is a set of vertex sets $\big\{ S_e \subseteq V(G) \bigm| e \in E(G) \big\}$, where for any $e \in E(G)$ the removal of $S_e$ disconnects $G-e$ and leaves more than $|S_e|/t$ components.

 Now we show that $L_{1,2} \in \text{NP}$, i.e. we can express $L_{1,2}$ in the form
  \[L_{1,2} = \big\{ G \text{ graph} \bigm| \tau(G) < t + \varepsilon \big\} \text{,} \]
 which is the complement of a language belonging to coNP. Let $G$ be an arbitrary graph on $n$~vertices. If $G$ is disconnected, then $\tau(G) = 0$, and if $G$ is complete, then $\tau(G) = \infty$, so in both cases $\tau(G) \le t$ if and only if $\tau(G) < t + \varepsilon$ for any positive number~$\varepsilon$. If $G$ is connected and noncomplete, then from Corollary~\ref{toughness_gap} it follows that $\tau(G) \le t$ if and only if $\tau(G) < t + 1/n^2$. Therefore
  \[ L_{1,2} = \big\{ G \text{ graph} \bigm| \tau(G) \le t \big\} = \left\{ G \text{ graph} ~ \middle| ~ \tau(G) < t + \frac{1}{|V(G)|^2} \right\} \text{,} \]
 so $L_{1,2} \in \text{NP}$.
 
 Since $L_{1,1} \cap L_{1,2} \in \text{NP}$ and $L_2 \in \text{coNP}$ and $\text{{\scshape Min-\mbox{$t$-Tough}}} = \left( L_{1,1} \cap L_{1,2} \right) \cap L_2$, we can conclude that $\text{{\scshape Min-\mbox{$t$-Tough}}} \in \text{DP}$.
\end{proof}

\begin{prop} \label{minttoughlemma}
 Let $t$ be a positive rational number and $G$ a minimally \mbox{$t$-tough} graph. For every edge~$e$ of~$G$,
 \begin{enumerate}
  \item the edge~$e$ is a bridge in~$G$, or
  \item there exists a vertex set ${S=S(e) \subseteq V(G)}$ with 
    \[ \omega(G-S) \le \frac{|S|}{t} \quad \text{and} \quad \omega \big( (G-e)-S \big) > \frac{|S|}{t} \text{,} \]
   and the edge~$e$ is a bridge in $G-S$.
 \end{enumerate}
 In the first case, we define $S = S(e) = \emptyset$.
\end{prop}
\begin{proof}
 Let $e$ be an arbitrary edge of~$G$ which is not a bridge. Since $G$ is minimally \mbox{$t$-tough}, ${\tau (G-e) < t}$. Since $e$ is not a bridge, $G-e$ is still connected, so there exists a cutset $S=S(e) \subseteq V(G-e)=V(G)$ in $G-e$ satisfying $\omega \big( (G-e)-S \big) > |S|/t$.
 
 By Proposition~\ref{obs_below1}, if $t \le 1$, then $\omega (G-S) \le |S|/t$. So assume that $t > 1$. Now there are two cases.
 
 \bigskip
 
 \textit{Case 1:} ($t > 1$ and) $S$ is a cutset in~$G$.
 
 Since $\tau (G) = t$ and $S$ is a cutset, $\omega (G-S) \le |S|/t$. This is only possible if $e$ connects two components of $(G-e)-S$.
 
 \bigskip
 
 \textit{Case 2:} ($t > 1$ and) $S$ is not a cutset in~$G$. 
 
 Then $\omega(G-S) = 1$. Since $S$ is a cutset in $G-e$, the edge~$e$ must connect two components of $(G-e)-S$, so 
 \[ \omega \big( (G-e) - S \big) = 2 \text{.} \]
 Now we show that $\omega(G-S) \le |S|/t$. Suppose to the contrary that $\omega(G-S) > |S|/t$. Since $\omega(G-S) = 1$, this implies that $|S| < t$. Moreover, since $\tau(G) = t$, the graph~$G$ is \mbox{$\lceil 2t \rceil$-}con\-nected, and thus it has at least $2t+1$ vertices. From this it follows that $S$ and one of the endpoints of $e$ form a cutset in $G$, otherwise $G$ would only have 
 \[ |S|+2 < t+2 < 2t+1 \]
 vertices (where the latter inequality is valid since $t > 1$). Let $S'$ denote this cutset. Since $G$ is \mbox{$t$-tough} and $S'$ is a cutset in~$G$,
  \[ 2 \le \omega(G-S') \le \frac{|S'|}{t} = \frac{|S|+1}{t} \text{,} \]
 so $|S| \ge 2t-1$. Therefore
  \[ 2t-1 \le |S| < t \text{,} \]
 which implies that $t < 1$ and that is a contradiction.
\end{proof}

\subsection{Almost minimally 1-tough graphs}

The graphs~$K_2$ and~$K_3$ behave similarly as minimally \mbox{1-tough} graphs: they are \mbox{1-tough}, and the removal of any of their edges decreases their toughness below 1. However, they are not minimally \mbox{1-tough} since their toughness is infinity. To handle these kinds of graphs, we introduce the following definition.

\begin{defi}
 A graph~$G$ is \emph{almost minimally \mbox{$1$-tough}} if $\tau(G) \ge 1$ and $\tau(G - e) < 1$ for all $e \in E(G)$.
\end{defi}

In fact, the only almost minimally \mbox{1-tough} graphs are minimally \mbox{1-tough} graphs and the graphs~$K_2$ and~$K_3$.

\begin{claim} \label{almostmin1tough_equivalent_forms}
 For a graph~$G$ the following are equivalent.
 \begin{enumerate}
  \item[$(1)$] The graph~$G$ is almost minimally \mbox{1-tough}.
  \item[$(2)$] The graph~$G$ is \mbox{1-tough} and for every $e \in E(G)$, the edge~$e$ is a bridge or there exists a vertex set $S = S(e) \subseteq V(G)$ with
   \[ \omega(G-S) = |S| \qquad \text{and} \qquad \omega \big( (G-e)-S \big) = |S| + 1 \text{.} \]
  (If $e$ is a bridge, we define $S=S(e)=\emptyset$.)
  \item[$(3)$] The graph~$G$ is either minimally \mbox{1-tough} or $G \simeq K_2$ or $G \simeq K_3$.
 \end{enumerate}
\end{claim}
\begin{proof} $\empty$

 $(1) \Longrightarrow (2):$ Let $e$ be an arbitrary edge of~$G$, and let us assume that it is not a bridge. Since $\tau (G-e) < 1$ and $G-e$ is still connected, there exists a cutset $S=S(e) \subseteq V(G-e)=V(G)$ in $G-e$ satisfying $\omega \big( (G-e)-S \big) > |S|$.
 
 Now there are two cases.
 
 \bigskip
 
 \textit{Case 1:} $S$ is a cutset in~$G$.
 
 Since $\tau (G) \ge 1$ and $S$ is a cutset, $\omega (G-S) \le |S|$. This is only possible if $e$ connects two components of $(G-e)-S$, which means that 
  \[ \omega \big( (G-e)-S \big) = |S| + 1 \qquad \text{and} \qquad \omega (G-S) = |S| \text{.} \]
 
 \bigskip
 
 \textit{Case 2:} $S$ is not a cutset in~$G$.
 
 Then $\omega(G-S) = 1$. On the other hand,
  \[ \omega \big( (G-e)-S \big) \ge 2 \]
 since $S$ is a cutset in $G-e$. This is only possible if $e$ connects two components of $(G-e)-S$, which means that
 \[ \omega \big( (G-e)-S \big) = 2 \text{.} \]
 Since
  \[ \omega \big( (G-e)-S \big) > |S| \text{,} \]
 this implies that $|S| \le 1$. Moreover, $|S| = 1$ since $e$ is not a bridge in~$G$. Hence,
  \[ \omega \big( (G-e)-S \big) = |S| + 1 \qquad \text{and} \qquad \omega (G-S) = |S| \text{.} \]
 
 \bigskip
 
 $(2) \Longrightarrow (3):$ Then $\tau(G) \ge 1$ and $\tau(G-e) < 1$ for every $e \in E(G)$. Let us assume that $G$ is not minimally \mbox{1-tough}, i.e.~$\tau(G) > 1$. We need to show that $G \simeq K_2$ or $G \simeq K_3$.
 
 Suppose to the contrary that $G$ has at least 4~vertices. Let $e \in E(G)$ be an arbitrary edge, and let $S = S(e) \subseteq V(G)$ be a vertex set for which 
  \[ \omega(G-S) = |S| \qquad \text{and} \qquad \omega \big( (G-e)-S \big) = |S| + 1 \text{.} \]
 Since $\tau(G) > 1$ and $\omega(G-S) = |S|$, the vertex set~$S$ cannot be a cutset in~$G$, so $|S| \le 1$ must hold. Since $G$ has at least 4~vertices, $S$ and one of the endpoints of~$e$ form a cutset of size at most~2, so $\tau(G) \le 1$, which is a contradiction. This means that $G \simeq K_2$ or $G \simeq K_3$, since there are no other \mbox{1-tough} graphs on at most~3 vertices with at least one edge.
 
 \bigskip
 
 $(3) \Longrightarrow (1):$ Trivial.
\end{proof}

\begin{prop} \label{obs_almostmin1tough}
 Let $G$ be an almost minimally \mbox{1-tough} graph. Then $\omega(G-S) \le |S|$ for any nonempty proper subset~$S$ of~$V(G)$.
\end{prop}
\begin{proof}
 By Claim~\ref{almostmin1tough_equivalent_forms}, the graph~$G$ is either minimally \mbox{1-tough} or $G \simeq K_2$ or $G \simeq K_3$. If $G$ is minimally \mbox{1-tough}, then $\tau(G) = 1$, and we already covered this case in Proposition~\ref{obs_below1}. If $G \simeq K_2$ or $G \simeq K_3$, then $\omega(G-S) = 1$ and $1 \le |S| \le 2$ hold for any nonempty proper subset~$S$ of~$V(G)$.
\end{proof}

\subsection{$\alpha$-critical graphs}

First, we cite some results on \mbox{$\alpha$-critical} graphs.

\begin{prop}[Problem $12$ of \S 8 in \cite{lovasz}] \label{alphacrit_max_indep}
 If $G$ is an \mbox{$\alpha$-critical} graph without isolated vertices, then every point is contained in at least one maximum independent vertex set.
\end{prop}

\begin{lemma}[Problem $14$ of \S 8 in \cite{lovasz}] \label{alpha_crit_blowup}
 If we replace a vertex of an \mbox{$\alpha$-critical} graph with a clique, and connect every neighbor of the original vertex with every vertex in the clique, then the resulting graph is still \mbox{$\alpha$-critical}.
\end{lemma}

\begin{lemma}[\cite{lovasz_matching}] \label{how_to_increase_alpha}
 Let $G$ be an \mbox{$\alpha$-critical} graph and~$w$ an arbitrary vertex of degree at least~2. Split~$w$ into two vertices~$y$ and~$z$, each of degree at least~1, add a new vertex~$x$ and connect it to both~$y$ and~$z$. Then the resulting graph~$G'$ is \mbox{$\alpha$-critical}, and $\alpha(G') = \alpha(G) + 1$.
\end{lemma}

For one of our proofs we also need the following observation, which is a straightforward consequence of Corollary~\ref{alpha_crit_dp_complete} and Lemmas~\ref{alpha_crit_blowup} and~\ref{how_to_increase_alpha}.

\begin{prop} \label{alphacrit_variant}
 For any positive integers~$l$ and~$m$, the following variant of the problem {\scshape \mbox{$\alpha$-Critical}} is DP-complete.
 
 \medskip
 \normalfont{
 \noindent
 \textit{Instance:} an \mbox{$l$-}con\-nected graph~$G$ and a positive integer~$k$ that is divisible by~$m$. \\
 \textit{Question:} is it true that $\alpha(G) < k$, but $\alpha(G-e) \ge k$ for any edge $e \in E(G)$?}
\end{prop}

\section{On some special cases of Theorem~\ref{minttough_dp_complete}} \label{section_special_cases}

This section aims to highlight the key steps of the proof of Theorem~\ref{minttough_dp_complete} by considering some simpler cases of it. In the view of this intention, technical details are omitted here.

Let $n \ge 2$ be an integer and let $G$ be a complete graph of size~$n$ on the vertex set $\{ v_1, \ldots,v_n \}$. Add the vertices $u_1, \ldots, u_n$ and~$w$ to~$G$, and for all $i \in [n]$ connect $v_i$ and~$u_i$, and also $u_i$ and~$w$, and let $G'$ denote the obtained graph. (For an example see Figure~\ref{Fig:G'} in the Appendix.) It is easy to see that $G'$ is a minimally \mbox{1-tough} graph, and it is due to the fact that complete graphs are \mbox{$\alpha$-critical}. This plain construction inspires all the others proposed in this paper. This construction can be generalized for \mbox{$\alpha$-critical} graphs in general to obtain a minimally 1-tough graph. (See Figure~\ref{Fig:G'12}.) The construction for minimally integer-tough graphs can be seen as a ``blow-up'' of the minimally 1-tough construction. (See Figure~\ref{Fig:G'21}.) These constructions are described in details in the following subsection.

\subsection{On the case of minimally $t$-tough graphs, where $t$ is a positive integer} \label{subsection_integer}

Let $t$, $k$ and $n \ge t+1$ be positive integers, let $G$ be an arbitrary $\big\lceil (t+1)/2 \big\rceil$-connected graph on the vertices $v_1, \ldots, v_n$, and let $G'_{t,k}$ be defined as follows. For all $i \in [n]$ and $j \in [k]$ let
 \[ V_{i,j} = \big\{ v_{i,j,l} \bigm| l \in [t] \big\} \text{.} \]
For all $i \in [n]$ let
 \[ V_i = \bigcup_{j \in [k]} V_{i,j} \]
and place a complete graph on its vertices. For all $i_1,i_2 \in [n]$ if $v_{i_1} v_{i_2} \in E(G)$, then place a complete bipartite graph on $(V_{i_1}; V_{i_2})$. For all $i \in [n]$ and $j \in [k]$ add the vertex set
 \[ U_{i,j} = \big\{ u_{i,j,l} \bigm| l \in [t] \big\} \]
to the graph and place a complete graph on the vertices of $U_{i,j}$. For all $i \in [n], j \in [k], l \in [t]$ connect $v_{i,j,l}$ to $u_{i,j,l}$. For all $j \in [k]$ add the vertex set
 \[ W_j = \{ w_{j,1}, \ldots, w_{j,t} \} \]
to the graph and for all $i \in [n]$ place a complete bipartite graph on $(U_{i,j}; W_j)$.
Let
 \[ V = \bigcup_{i = 1}^n V_i \text{,} \qquad U = \bigcup_{i=1}^n \bigcup_{j=1}^{k} U_{i,j} \text{,} \qquad W = \bigcup_{j = 1}^{k} W_j \text{.} \]
See Figure~\ref{Fig:G'tk_t_integer}. (For examples see Figures~\ref{Fig:G'12} and~\ref{Fig:G'21}.)

\begin{figure}[H]
 \begin{center}
 \begin{tikzpicture}
  \tikzstyle{vertex}=[draw,circle,fill=black,minimum size=3,inner sep=0]
  
  \draw (-0.25,0) ellipse (1 and 3);
  \draw (0,1.45) ellipse (0.3 and 1);
  \draw[fill] (0,0.1) circle (0.3pt);
  \draw[fill] (0,0) circle (0.3pt);
  \draw[fill] (0,-0.1) circle (0.3pt);
  \draw (0,-1.45) ellipse (0.3 and 1);
  
  \node at (-1,3) {$G$};
  \node at (-0.55,2.3) {$V_1$};
  \node at (-0.55,-0.7) {$V_n$};
  
  \node[vertex] (v11) at (0,2.25) {};
  \draw[fill] (0,2.1) circle (0.3pt);
  \draw[fill] (0,2) circle (0.3pt);
  \draw[fill] (0,1.9) circle (0.3pt);
  \node[vertex] (v12) at (0,1.75) {};
  \draw[fill] (0,1.55) circle (0.3pt);
  \draw[fill] (0,1.45) circle (0.3pt);
  \draw[fill] (0,1.35) circle (0.3pt);
  \node[vertex] (v13) at (0,1.15) {};
  \draw[fill] (0,1) circle (0.3pt);
  \draw[fill] (0,0.9) circle (0.3pt);
  \draw[fill] (0,0.8) circle (0.3pt);
  \node[vertex] (v14) at (0,0.65) {};
  
  \node[vertex] (vn1) at (0,-0.65) {};
  \draw[fill] (0,-0.8) circle (0.3pt);
  \draw[fill] (0,-0.9) circle (0.3pt);
  \draw[fill] (0,-1) circle (0.3pt);
  \node[vertex] (vn2) at (0,-1.15) {};
  \draw[fill] (0,-1.35) circle (0.3pt);
  \draw[fill] (0,-1.45) circle (0.3pt);
  \draw[fill] (0,-1.55) circle (0.3pt);
  \node[vertex] (vn3) at (0,-1.75) {};
  \draw[fill] (0,-1.9) circle (0.3pt);
  \draw[fill] (0,-2) circle (0.3pt);
  \draw[fill] (0,-2.1) circle (0.3pt);
  \node[vertex] (vn4) at (0,-2.25) {};
  
  \node[vertex] (u11) at (2,2.25) {};
  \draw[fill] (2,2.1) circle (0.3pt);
  \draw[fill] (2,2) circle (0.3pt);
  \draw[fill] (2,1.9) circle (0.3pt);
  \node[vertex] (u12) at (2,1.75) {};
  \draw[fill] (2,1.55) circle (0.3pt);
  \draw[fill] (2,1.45) circle (0.3pt);
  \draw[fill] (2,1.35) circle (0.3pt);
  \node[vertex] (u13) at (2,1.15) {};
  \draw[fill] (2,1) circle (0.3pt);
  \draw[fill] (2,0.9) circle (0.3pt);
  \draw[fill] (2,0.8) circle (0.3pt);
  \node[vertex] (u14) at (2,0.65) {};
  
  \node[vertex] (un1) at (2,-0.65) {};
  \draw[fill] (2,-0.8) circle (0.3pt);
  \draw[fill] (2,-0.9) circle (0.3pt);
  \draw[fill] (2,-1) circle (0.3pt);
  \node[vertex] (un2) at (2,-1.15) {};
  \draw[fill] (2,-1.35) circle (0.3pt);
  \draw[fill] (2,-1.45) circle (0.3pt);
  \draw[fill] (2,-1.55) circle (0.3pt);
  \node[vertex] (un3) at (2,-1.75) {};
  \draw[fill] (2,-1.9) circle (0.3pt);
  \draw[fill] (2,-2) circle (0.3pt);
  \draw[fill] (2,-2.1) circle (0.3pt);
  \node[vertex] (un4) at (2,-2.25) {};
  
  \draw (2,2) ellipse (0.25 and 0.35);
  \draw (2,0.9) ellipse (0.25 and 0.35);
  \draw (2,-0.9) ellipse (0.25 and 0.35);
  \draw (2,-2) ellipse (0.25 and 0.35);
  
  \node at (1.4,2.5) {$U_{1,1}$};
  \node at (1.4,1.4) {$U_{1,k}$};
  \draw[fill] (2,0.1) circle (0.3pt);
  \draw[fill] (2,0) circle (0.3pt);
  \draw[fill] (2,-0.1) circle (0.3pt);
  \node at (1.4,-0.4) {$U_{n,1}$};
  \node at (1.4,-1.5) {$U_{n,k}$};
  
  \node at (2.4,2.75) {$U$};
  
  \draw (v11) -- (u11);
  \draw (v12) -- (u12);
  \draw (v13) -- (u13);
  \draw (v14) -- (u14);
  \draw (vn1) -- (un1);
  \draw (vn2) -- (un2);
  \draw (vn3) -- (un3);
  \draw (vn4) -- (un4);
  
  \node[vertex] (w11) at (4.25,1) [label=right:{$w_{1,1}$}] {};
  \draw[fill] (4.25,0.85) circle (0.3pt);
  \draw[fill] (4.25,0.75) circle (0.3pt);
  \draw[fill] (4.25,0.65) circle (0.3pt);
  \node[vertex] (w1t) at (4.25,0.5) [label=right:{$w_{1,t}$}] {};
  \draw[fill] (4.25,0.1) circle (0.3pt);
  \draw[fill] (4.25,0) circle (0.3pt);
  \draw[fill] (4.25,-0.1) circle (0.3pt);
  \node[vertex] (wa1) at (4.25,-0.5) [label=right:{$w_{k,1}$}] {};
  \draw[fill] (4.25,-0.65) circle (0.3pt);
  \draw[fill] (4.25,-0.75) circle (0.3pt);
  \draw[fill] (4.25,-0.85) circle (0.3pt);
  \node[vertex] (wat) at (4.25,-1) [label=right:{$w_{k,t}$}] {};
  
  \node at (4.25,1.7) {$W$};
  
  \draw (u11) -- (w11);
  \draw (u12) -- (w1t);
  \draw (u11) -- (w1t);
  \draw (u12) -- (w11);
  
  \draw (un1) -- (w11);
  \draw (un2) -- (w1t);
  \draw (un1) -- (w1t);
  \draw (un2) -- (w11);
  
  \draw (u13) -- (wa1);
  \draw (u14) -- (wat);
  \draw (u13) -- (wat);
  \draw (u14) -- (wa1);
  
  \draw (un3) -- (wa1);
  \draw (un4) -- (wat);
  \draw (un3) -- (wat);
  \draw (un4) -- (wa1);
 \end{tikzpicture}
 \caption{The graph~$G'_{t,k}$, when $t$ is a positive integer.} \label{Fig:G'tk_t_integer}
 \end{center}
\end{figure}
 
\begin{claim} \label{reduction_step_of_minttough_dp_complete_t_integer}
  Let $G$ be an arbitrary $\big\lceil (t+1)/2 \big\rceil$-connected graph. Then $G$ is \mbox{$\alpha$-critical} with $\alpha(G) = k$ if and only if $G'_{t,k}$ is minimally 1-tough.
\end{claim}
\begin{proof}
 The cases $t=1$ and $t \ge 2$ should be handled separately, but since the main steps of the proofs are similar, only the (easier) case $t=1$ is presented here.
 
 The proof of the following lemma is omitted now. (In Section~\ref{section_t>=1} a similar lemma is proved, but for a more complex graph, see Lemma~\ref{dp_complete_main_lemma_t>=1}.)
 
 \begin{lemma} \label{dp_complete_main_lemma_t=1}
  If $\alpha(G) \le k$, then $\tau(G'_{1,k})=1$.
 \end{lemma}
 
 Accepting this lemma, all we have left to show is that
 \begin{enumerate}
  \item[--] if $G$ is \mbox{$\alpha$-critical} with $\alpha(G) = k$, then $\tau(G'_{1,k} - e) < 1$ holds for any $e \in E(G)$,
  \item[--] if $\alpha(G) > k$, then $\tau(G'_{1,k}) < 1$, and
  \item[--] if either $\alpha(G) = k$ but the graph~$G$ is not \mbox{$\alpha$-critical} or $\alpha(G)<k$, then there exists an edge $e \in E(G)$ for which $\tau(G'_{1,k} - e) = 1$.
 \end{enumerate}
 
 Assume first that $G$ is \mbox{$\alpha$-critical} with $\alpha(G) = k$. Let $e \in E(G'_{1,k})$ be an arbitrary edge. If $e$ is incident to one of the vertices of~$U$, i.e., to a vertex of degree~2, then clearly $\tau(G'_{1,k} - e) < 1$. If $e$ is not incident to any of the vertices of~$U$, then it connects two vertices of~$V$. By Lemma~\ref{alpha_crit_blowup}, the subgraph $G'_{1,k} [V]$ is \mbox{$\alpha$-critical}, so in $G'_{1,k} [V] - e$ there exists an independent vertex set~$I$ of size $\alpha(G) + 1$. Let
  \[ S = (V \setminus I) \cup W \text{.} \]
 Then it is easy to see that
  \[ |S| = |V|-1 \qquad \text{and} \qquad \omega \big( (G'_{1,k}-e)-S \big) = |V| \]
 hold, so $\tau(G'_{1,k} - e) < 1$.
 
 Now assume $\alpha(G) > k$. Then let $I$ be an independent vertex set of size $\alpha(G)$ in $G'_{1,k} [V]$, and let 
  \[ S=(V \setminus I) \cup W \text{.} \]
 Then 
  \[ |S| < |V| \qquad \text{and} \qquad \omega(G'_{1,k}-S) = |V| \]
 hold, so $\tau(G'_{1,k}) < 1$.
 
 Finally, assume that either $\alpha(G) = k$ but the graph~$G$ is not \mbox{$\alpha$-critical} or $\alpha(G) < k$. Then there exists an edge $e \in E(G)$ such that $\alpha(G - e) \le k$. By Lemma~\ref{dp_complete_main_lemma_t=1}, the graph $(G-e)'_{1,k}$ is \mbox{1-tough}, but we can obtain $(G-e)'_{1,k}$ from $G'_{1,k}$ by edge-deletion, which means that $G'_{1,k}$ is not minimally \mbox{1-tough}.
\end{proof}

\begin{corollary} \label{minttough_dp_complete_t_integer}
 For any positive integer~$t$, the problem {\scshape Min-\mbox{$t$-Tough}} is DP-complete.
\end{corollary}
\begin{proof}
 In Proposition~\ref{min_t_tough_DP} we already proved that the problem {\scshape Min-\mbox{$t$-Tough}} is in DP, and it follows from Claim~\ref{reduction_step_of_minttough_dp_complete_t_integer} that we can reduce {\scshape \mbox{$\alpha$-Critical}} to it, but for this it should be also noted that $G'_{t,k}$ can be constructed from~$G$ in polynomial time.
\end{proof}

The above construction works only in the case when $t$ is a positive integer for the simple reason that the sets $V_{i,j}$, $U_{i,j}$ and $W_j$ consist of $t$~vertices.

\subsection{On the case of minimally $1/b$-tough graphs, where $b \ge 2$ is an integer} \label{subsection_1/b}

% The technique how the previous construction is modified in order not to be restricted to the case of integers is illustrated by the following simple example.

Up to this point, we only handled the case when $t$ is a positive integer. To prove Theorem~\ref{minttough_dp_complete} for the noninteger cases, we modify the previous constructions and here we illustrate these modifications with the following simple example.

Let $b \ge 2$ be an integer, let $t=1/b$, let $G$ be an arbitrary connected graph, and let $G_t$ be defined as follows. Add $b-1$ independent vertices for each original vertex $v \in V(G)$ to the graph~$G$, and connect them to~$v$ (see Figure~\ref{Fig:minttough_deterministic_constr_t=1/b}). (For an example see Figure~\ref{Fig:G1/2}.)

\begin{figure}[H]
\begin{center}
\begin{tikzpicture}
 \tikzstyle{vertex}=[draw,circle,fill=black,minimum size=3,inner sep=0]
 
 \draw (0,0) ellipse (0.4 and 2);
 \node at (-0.5,2) {$G$};
 
 \node[vertex] (v1) at (0,1.35) {};
 \node[vertex] (v2) at (0,0.45) {};
 \draw[fill] (0,-0.35) circle (0.3pt);
 \draw[fill] (0,-0.45) circle (0.3pt);
 \draw[fill] (0,-0.55) circle (0.3pt);
 \node[vertex] (vn) at (0,-1.35) {};
 
 \node at (-0.65,1.35) {$v_1$};
 \node at (-0.65,0.45) {$v_2$};
 \node at (-0.65,-1.35) {$v_n$};
 
 \node[vertex] (4) at (1,1.6) {};
 \draw[fill] (1,1.45) circle (0.3pt);
 \draw[fill] (1,1.35) circle (0.3pt);
 \draw[fill] (1,1.25) circle (0.3pt);
 \node[vertex] (5) at (1,1.1) {};
 \node[vertex] (6) at (1,0.7) {};
 \draw[fill] (1,0.55) circle (0.3pt);
 \draw[fill] (1,0.45) circle (0.3pt);
 \draw[fill] (1,0.35) circle (0.3pt);
 \node[vertex] (7) at (1,0.2) {};
 \node[vertex] (8) at (1,-1.1) {};
 \draw[fill] (1,-1.25) circle (0.3pt);
 \draw[fill] (1,-1.35) circle (0.3pt);
 \draw[fill] (1,-1.45) circle (0.3pt); 
 \node[vertex] (9) at (1,-1.6) {};
 
 \draw (v1) -- (4);
 \draw (v1) -- (5);
 \draw (v2) -- (6);
 \draw (v2) -- (7);
 \draw (vn) -- (8);
 \draw (vn) -- (9);
 
 \draw [decorate,decoration={brace,amplitude=5pt}] (1.25,1.7) -- (1.25,1) node [pos=0.5, xshift=22pt] {$b-1$};
 \draw [decorate,decoration={brace,amplitude=5pt}] (1.25,0.8) -- (1.25,0.1) node [pos=0.5, xshift=22pt] {$b-1$};
 \draw [decorate,decoration={brace,amplitude=5pt}] (1.25,-1) -- (1.25,-1.7) node [pos=0.5, xshift=22pt] {$b-1$};
\end{tikzpicture}
\caption{The graph~$G_t$ when $t=1/b$, where $b \ge 2$ is an integer.} \label{Fig:minttough_deterministic_constr_t=1/b}
\end{center}
\end{figure}

\begin{claim} \label{reduction_step_of_minttough_dp_complete_t=1/b}
  Let $G$ be an arbitrary connected graph. Then $G_t$ is minimally \mbox{$t$-tough} if and only if $G$ is almost minimally \mbox{1-tough}.
\end{claim}

Similarly as before, we can conclude the following.

\begin{corollary} \label{minttough_dp_comple_t=1/b}
 For every integer $b \ge 2$, {\scshape Min-$1/b$-Tough} is DP-complete.
\end{corollary}

In Section~\ref{section_t<=1/2}, this latter idea is extended to the case when $t<=1/2$ by ``gluing'' some other graph to the vertices of the original graph~$G$. (See Figure~\ref{Fig:G2/5}.) It is worth noting that in the case when $t=1/b$ for some integer $b \ge 2$, the obtained graph in Section~\ref{section_t<=1/2} is exactly the same as the graph~$G_t$ constructed here. After this ``gluing'', the vertices of~$G$ become cut-vertices in the obtained graph~$G_t$, thus the toughness of~$G_t$ can be at most~1/2. The plan for the cases when $t>1/2$ is to perform this so called ``gluing'' by identifying not only one, but $2t$~vertices of a smaller and a larger graph, where the larger graph resembles a minimally $\lceil t \rceil$-tough graph and the ``gluing'' procedure aims to decrease its toughness to the desired value $t$. In fact, in Sections~\ref{section_1/2<t<1} and \ref{section_t>=1} this larger graph is chosen to be a slight modification of~$G'_{\lceil t \rceil,k}$.

\section{Minimally $t$-tough graphs, where $1/2 < t < 1$} \label{section_1/2<t<1}

Before proving Theorem~\ref{minttough_dp_complete} for any positive rational number $1/2 < t < 1$, we need some preparation. First, we construct some auxiliary graphs.

\subsection{The auxiliary graph $H^{**}_{t,k}$ when $1/2 < t < 1$}

Let $t$ be a rational number such that $1/2 < t < 1$. Let $a,b$ be relatively prime positive integers such that $t = a/b$. Let $k$ be a positive integer, and let 
 \[ W = \{ w_1, \ldots, w_{a k} \} \qquad \text{and} \qquad W' = \big\{ w'_1, \ldots, w'_{(b-1) k} \big\} \text{.} \]
Place a clique on the vertices of~$W$ and a complete bipartite graph on $(W;W')$. Obviously, the toughness of this complete split graph is $a/(b-1) > t$. Deleting an edge may decrease the toughness, and now we delete edges incident to~$W'$ until the toughness remains at least~$t$ but the deletion of any other such edge would result in a graph with toughness less than~$t$. Let $H^*_{t,k}$ denote the obtained split graph. Then $\tau(H^*_{t,k}) \ge t$, and $\tau(H^*_{t,k} - e) < t$ for any edge $e \in E(H^*_{t,k})$ incident to~$W'$, i.e. there exists a vertex set $S=S(e) \subseteq W$ whose removal disconnects $H^*_{t,k}-e$ and
\[ \omega \big( (H^{*}_{t,k} - e) - S \big) > \frac{|S|}{t} \text{.} \]
Now delete all the edges induced by~$W$, and let $H^{**}_{t,k}$ denote the obtained bipartite graph.

\subsection{The auxiliary graph $H''_t$ when $1/2 < t < 1$}

Let $t$ be a rational number such that $1/2 < t < 1$. Let $a,b$ be relatively prime positive integers such that $t = a/b$ and let $H_t$ be constructed as follows. Let 
 \[ A = \{ v_1, v_2, \ldots, v_a \} \text{, } \quad B = \{ u_1, u_2, \ldots, u_b \} \text{.} \]
For any $i \in [a]$ and $j \in [b-1]$ connect $v_i$ to~$u_j$, and connect $u_b$ to~$v_1$ and~$v_a$. (In other words, $H_t$ can be obtained from the complete bipartite graph~$K_{a,b}$ by deleting $a-2$~edges incident to one vertex of the color class of size~$b$. See Figure~\ref{Fig:Ht_1/2<t<1}.)

\begin{figure}[H] 
\begin{center}
\begin{tikzpicture}
 \tikzstyle{vertex}=[draw,circle,fill=black,minimum size=4,inner sep=0]
 
 \draw[dashed] (0,0) ellipse (0.3 and 1.1);
 \node at (0,1.5) {$\overline{K}_a$};
 
 \node[vertex] (v1) at (0,0.75) [label={[xshift=-12pt, yshift=-8pt] $v_1$}] {};
 \node[vertex] (v2) at (0,0.375) {};
 \draw[fill] (0,0.1) circle (0.3pt);
 \draw[fill] (0,0) circle (0.3pt);
 \draw[fill] (0,-0.1) circle (0.3pt);
 \node[vertex] (va-1) at (0,-0.375)  {};
 \node[vertex] (va) at (0,-0.75) [label={[xshift=-12pt, yshift=-13pt] $v_a$}] {};
 
 \node at (0,-1.5) {$A$};
 
 \node[vertex] (ub) at (-1,0) [label=left:{$u_b$}] {};
 
 \draw[dashed] (2,0) ellipse (0.4 and 1.5);
 
 \node at (2,2) {$\overline{K}_{b-1}$};
 \node at (2,-2) {$B'$};
 
 \node[vertex] (u1) at (2,1) [label={[xshift=18pt, yshift=-10pt] $u_1$}] {};
 \node[vertex] (u2) at (2,0.5) [label={[xshift=18pt, yshift=-10pt] $u_2$}] {};
 \draw[fill] (2,0.1) circle (0.3pt);
 \draw[fill] (2,0) circle (0.3pt);
 \draw[fill] (2,-0.1) circle (0.3pt);
 \node[vertex] (ub-2) at (2,-0.5) [label={[xshift=23pt, yshift=-12pt] $u_{b-2}$}] {};
 \node[vertex] (ub-1) at (2,-1)  [label={[xshift=23pt, yshift=-12pt] $u_{b-1}$}] {};
 
 \draw (ub) -- (v1);
 \draw (ub) -- (va);
 
 \draw (0.05,1.2) -- (1.9,1.575);
 \draw (0.05,-1.2) -- (1.9,-1.575);
 \draw (0.2,1.1) -- (1.7,-1.4);
 \draw (0.2,-1.1) -- (1.7,1.4);
\end{tikzpicture}
\caption{The graph~$H_t$, when $1/2 < t < 1$.}
\label{Fig:Ht_1/2<t<1}
\end{center}
\end{figure}

\begin{claim} \label{properties_of_Ht_1/2<t<1}
 Let $t$ be a rational number such that $1/2 < t < 1$. Then $\tau(H_t) = t$.
\end{claim}
\begin{proof}
 Let $B' = B \setminus \{ u_b \}$ and let $S$ be an arbitrary cutset in~$H_t$. Now we show that $\omega(H_t - S) \le |S|/t$.
 
 \bigskip
 
 \textit{Case 1:} $A \subseteq S$.
 
 Then $|S| \ge a$ and $\omega(H_t-S) \le b$. Since $t = a/b < 1$, it follows that
 \[ \omega(H_t - S) \le b = \frac{a}{t} \le \frac{|S|}{t} \text{.} \]
 
 \bigskip
 
 \textit{Case 2:} $B' \subseteq S$.
 
 If $u_b \in S$ as well, then $|S| \ge b$ and $\omega(H_t-S) \le a$. Since $t = a/b < 1$, it follows that
 \[ \omega(H_t - S) \le a = bt < \frac{b}{t} \le \frac{|S|}{t} \text{.} \]
 If $u_b \notin S$, then $|S| \ge b-1$ and $\omega(H_t-S) \le a-1$. Since $t = a/b < 1$, it follows that
 \[ \omega(H_t - S) \le a-1 \le \frac{b-1}{t} \le \frac{|S|}{t} \text{.} \]
 
 \bigskip
 
 \textit{Case 3:} $A \nsubseteq S$ and $B' \nsubseteq S$.
 
 Then $\omega(H_t-S) \le 2$, but since $S$ is a cutset, $\omega(H_t-S) = 2$. Obviously, there is no cut-vertex in~$H_t$, thus $|S| \ge 2$. Since $t<1$, it follows that
 \[ \omega(H_t - S) = 2 < \frac{2}{t} \le \frac{|S|}{t} \text{.} \]
 
 \bigskip
 
 Hence $\tau(H_t) \ge t$. On the other hand, the vertex set $S=A$ is a cutset in~$H_t$ with $|S| = a$ and $\omega(H_t-S) = b$, so $\tau(H_t) \le t$.
 
 Therefore, $\tau(H_t) = t$.
\end{proof}

By repeatedly deleting some edges of~$H_t$, eventually we obtain a minimally \mbox{$t$-tough} graph, let us denote it with $H'_t$ (i.e. if there exists an edge whose deletion does not decrease the toughness, then we delete it). Obviously, we could not delete the edges incident to~$u_b$, so the vertex~$u_b$ still has degree~2. Let $e$ denote the edge connecting $v_1$ and~$u_b$ and let $H_t'' = H'_t - e$. Note that $H_t''$ is a bipartite graph with color classes $A$ and $B$.

\subsection{The proof of Theorem~\ref{minttough_dp_complete} when $1/2 < t < 1$} \label{subsection_1/2<t<1}

\begin{thm} \label{minttough_dp_complete_1/2<t<1}
 For any rational number $t$ with $1/2 < t < 1$, the problem {\scshape Min-\mbox{$t$-Tough}} is DP-complete.
\end{thm}

\begin{proof}
 Let $t$ be a rational number such that $1/2 < t < 1$. In Proposition~\ref{min_t_tough_DP} we already proved that the problem {\scshape Min-\mbox{$t$-Tough}} is in DP. To show that it is DP-hard, we reduce {\scshape \mbox{$\alpha$-Critical}} to it.

 Let $a,b$ be relatively prime positive integers such that $t = a/b$, let $G$ be an arbitrary \mbox{2-}con\-nected graph on the vertices $v_1, \ldots, v_n$ and let $G_{t,k}$ be defined as follows. For all $i \in [n]$ let
 \[ V_i = \big\{ v_{i,j} \bigm| i \in [n], j \in [a k] \big\} \]
 and place a clique on the vertices of~$V_i$. For all $i_1, i_2 \in [n]$ if $v_{i_1} v_{i_2} \in E(G)$, then place a complete bipartite graph on $(V_{i_1}; V_{i_2})$. (This subgraph is denoted by $\tilde{G}$ in Figure~\ref{minttough_deterministic_constr_1/2<t<1}.) For all $i \in [n], j \in [a k]$ ``glue'' the graph~$H_t''$ to the vertex~$v_{i,j}$ by identifying $v_{i,j}$ with the vertex~$v_1$ of~$H_t''$ and let $H^{i,j}$ denote the $(i,j)$-th copy of $H_t''$ and let $A^{i,j}$ denote the $(i,j)$-th copy of its color class $A$, and let~$v'_{i,j}$ and~$u_{i,j}$ denote the $(i,j)$-th copies of the vertices~$v_a$ and~$u_b$, respectively. Let
  \[ V = \bigcup_{i=1}^n V_i \text{,} \]
  \[ V' = \big\{ v'_{i,j} \bigm| i \in [n], j \in [a k] \big\} \]
 and
  \[ U = \big\{ u_{i,j} \bigm| i \in [n], j \in [a k] \big\} \text{.} \]
 Add the vertex sets
  \[ W = \big\{ w_j \bigm| j \in [a k] \big\} \]
 and
  \[ W' = \big\{ w'_1, \ldots, w'_{(b-1) k} \big\} \]
 to the graph and place the bipartite graph $H^{**}_{t,k}$ on $(W;W')$. For all $i \in [n]$ and $j \in [a k]$ connect $w_j$ to~$u_{i,j}$. See Figure~\ref{minttough_deterministic_constr_1/2<t<1}. (For an example see Figure~\ref{Fig:G2/3,1}.) Now $k$ is part of the input of the problem {\scshape \mbox{$\alpha$-Critical}}, therefore the graph $H^{**}_{t,k}$ must be constructed in polynomial time and by Theorem~\ref{split_general_thm}, this can be done. On the other hand, $t$~is not part of the input of the problem {\scshape Min-\mbox{$t$-Tough}}, therefore the graph~$H''_t$ can be constructed in advance. Hence, $G_{t,k}$ can be constructed from~$G$ in polynomial time.

 \begin{figure}[H]
 \begin{center}
 \begin{tikzpicture}[scale=1.25]
  \tikzstyle{vertex}=[draw,circle,fill=black,minimum size=3,inner sep=0]
   
  \draw (-0.3,0) ellipse (1.15 and 3);
  \draw (0,1.95) ellipse (0.5 and 0.5);
  \draw (0,0.65) ellipse (0.5 and 0.5);
  \draw[fill] (0,-0.55) circle (0.3pt);
  \draw[fill] (0,-0.65) circle (0.3pt);
  \draw[fill] (0,-0.75) circle (0.3pt);
  \draw (0,-1.95) ellipse (0.5 and 0.5);
  
  \node at (-1,3) {$\tilde{G}$};
  \node at (-0.2,1.95) {$V_1$};
  \node at (-0.2,0.65) {$V_2$};
  \node at (-0.2,-1.95) {$V_n$};
  
  \node[vertex] (v11) at (0,2.3) [label={[xshift=-23pt, yshift=-10pt] $v_{1,1}$}] {};
  \draw[fill] (0,2.05) circle (0.3pt);
  \draw[fill] (0,1.95) circle (0.3pt);
  \draw[fill] (0,1.85) circle (0.3pt);
  \node[vertex] (v1a) at (0,1.6) [label={[xshift=-25pt, yshift=-10pt] $v_{1,ak}$}] {};
  \node[vertex] (v21) at (0,1) [label={[xshift=-23pt, yshift=-10pt] $v_{2,1}$}] {};
  \draw[fill] (0,0.75) circle (0.3pt);
  \draw[fill] (0,0.65) circle (0.3pt);
  \draw[fill] (0,0.55) circle (0.3pt);
  \node[vertex] (v2a) at (0,0.3) [label={[xshift=-25pt, yshift=-10pt] $v_{2,ak}$}] {};
  \node[vertex] (vn1) at (0,-1.6) [label={[xshift=-23pt, yshift=-10pt] $v_{n,1}$}] {};
  \draw[fill] (0,-1.85) circle (0.3pt);
  \draw[fill] (0,-1.95) circle (0.3pt);
  \draw[fill] (0,-2.05) circle (0.3pt);
  \node[vertex] (vna) at (0,-2.3) [label={[xshift=-25pt, yshift=-10pt] $v_{n,ak}$}] {};
  
  \draw (0.75,2.3) ellipse (1 and 0.15);
  \draw (0.75,1.6) ellipse (1 and 0.15);
  \draw (0.75,1) ellipse (1 and 0.15);
  \draw (0.75,0.3) ellipse (1 and 0.15);
  \draw (0.75,-1.6) ellipse (1 and 0.15);
  \draw (0.75,-2.3) ellipse (1 and 0.15);
  
  \node at (2.125,2.3) {$H^{1,1}$};
  \node at (2.2,-2.3) {$H^{n,ak}$};
  
  \node[vertex] (v'11) at (1,2.3) [label={[xshift=0pt, yshift=0pt] $v'_{1,1}$}] {};
  \node[vertex] (v'1a) at (1,1.6) {};
  \node[vertex] (v'21) at (1,1) {};
  \node[vertex] (v'2a) at (1,0.3) {};
  \node[vertex] (v'n1) at (1,-1.6) {};
  \node[vertex] (v'na) at (1,-2.3) [label={[xshift=-5pt, yshift=-25pt] $v'_{n,a k}$}] {};
  
  \node[vertex] (u11) at (1.5,2.3) [label={[xshift=0pt, yshift=0pt] $u_{1,1}$}] {};
  \node[vertex] (u1a) at (1.5,1.6) {};
  \node[vertex] (u21) at (1.5,1) {};
  \node[vertex] (u2a) at (1.5,0.3) {};
  \node[vertex] (un1) at (1.5,-1.6) {};
  \node[vertex] (una) at (1.5,-2.3) [label={[xshift=5pt, yshift=-25pt] $u_{n,a k}$}] {};
  
  \draw (u11) -- (v'11);
  \draw (u1a) -- (v'1a);
  \draw (u21) -- (v'21);
  \draw (u2a) -- (v'2a);
  \draw (un1) -- (v'n1);
  \draw (una) -- (v'na);
  
  \node[vertex] (w1) at (2.5,0.75) [label={[xshift=10pt, yshift=-10pt] $w_1$}] {};
  \draw[fill] (2.5,0.1) circle (0.3pt);
  \draw[fill] (2.5,0) circle (0.3pt);
  \draw[fill] (2.5,-0.1) circle (0.3pt);
  \node[vertex] (wa) at (2.5,-0.75) [label={[xshift=12pt, yshift=-10pt] $w_{a k}$}] {};
  
  \node at (2.5,1.75) {$W$};
  
  \draw (u11) -- (w1);
  \draw (u1a) -- (wa);
  \draw (u21) -- (w1);
  \draw (u2a) -- (wa);
  \draw (un1) -- (w1);
  \draw (una) -- (wa);
  
  \node[vertex] (w'1) at (3.5,1) [label={[xshift=25pt, yshift=-12pt] $w'_1$}] {};
  \draw[fill] (3.5,0.1) circle (0.3pt);
  \draw[fill] (3.5,0) circle (0.3pt);
  \draw[fill] (3.5,-0.1) circle (0.3pt);
  \node[vertex] (w'b) at (3.5,-1) [label={[xshift=35pt, yshift=-15pt] $w'_{(b-1) k}$}] {};
  
  \node at (3.75,1.75) {$W'$};
  
  \draw (3.125,0) ellipse (1 and 1.5);
  
  \node at (3.125,-1.85) {$H_{t,k}^{**}$};
 \end{tikzpicture}
 \caption{The graph $G_{t,k}$, when $1/2 < t < 1$.} \label{minttough_deterministic_constr_1/2<t<1}
 \end{center}
 \end{figure}
 
 To show that $G$ is \mbox{$\alpha$-critical} with $\alpha(G) = k$ if and only if $G_{t,k}$ is minimally \mbox{$t$-tough}, first we prove the following lemma.
 
 \begin{lemma} \label{dp_complete_main_lemma_1/2<t<1}
  Let $G$ be a \mbox{2-}con\-nected graph with $\alpha(G) \le k$. Then $G_{t, k}$ is \mbox{$t$-tough}.
 \end{lemma}
 \begin{proof}
  Let $S \subseteq V(G_{t, k})$ be a cutset in $G_{t, k}$. We need to show that $\omega(G_{t, k} - S) \le |S|/t$.
  
  First, we show that the following assumption can be made for~$S$.
  
  \begin{description}
   \item[] (1) $U \cap S = \emptyset$.
   
    Suppose that $u_{i,j} \in S$ for some $i \in [n], j \in [a k]$. If $v'_{i,j} \in S$, then after the removal of~$v'_{i,j}$, the vertex $u_{i,j}$ has degree~$1$, so there is no need to remove it. Similarly, if $w_j \in S$, then we can also assume that $u_{i,j} \notin S$. If $v'_{i,j},w_j \notin S$, then considering $S' = S \setminus \{ u_{i,j} \}$ instead of~$S$ decreases the number of components only by one, meaning that if $S'$ is a cutset in $G_{t,k}$, then it is enough to show that $\omega(G_{t,k}-S') \le |S'|/t$ since it implies
     \[ \omega(G_{t,k}-S) = \omega(G_{t,k}-S') + 1 \le \frac{|S'|}{t} + 1 = \frac{|S|-1}{t} + 1 \le \frac{|S|}{t} \text{,} \]
    where the last inequality is valid since $t<1$. If $S'$ is not a cutset in $G_{t,k}$, then $\omega(G_{t,k} - S) = 2$ and $|S| \ge 2$ since $u_{i,j}$ has degree 2 and is not a cut-vertex in $G_{t,k}$, i.e.
     \[ \omega(G_{t,k}-S) = 2 \le |S| \le \frac{|S|}{t} \text{,} \]
    where again the last inequality is valid since $t<1$. This completes the validation of assumption~(1).
  \end{description}
  
  Now there are two cases.
  
  \bigskip
  
  \textit{Case 1:} $W \subseteq S$.
  
  After the removal of~$W$, the vertices of~$W'$ are isolated; therefore we can assume that $W' \cap S = \emptyset$.
  
  To write up a formula for $|S|$ and $\omega(G_{t,k} - S)$, we need to introduce some notations. Let
   \[ C = \big\{ (i,j) \in [n] \times [a k] \bigm| v_{i,j} \in V \cap S \big\} \text{,} \]
   \[ c_{i,j} = \big| V(H^{i,j}) \cap S \big| - 1 \]
  for all $(i,j) \in C$, and 
  \[ d_{i,j} = \big| V(H^{i,j}) \cap S \big| \]
  for all $(i,j) \in \big( [n] \times [a k] \big) \setminus C$. Finally, let
   \[ D = \big\{ (i,j) \in \big( [n] \times [a k] \big) \setminus C \bigm| d_{i,j} > 0 \big\} \text{.} \]
  
  Using these notations it is clear that
   \[ |S| = \sum_{(i,j) \in [n] \times [ak]} \big| V(H^{i,j}) \cap S \big| + |W| = |C| + \sum_{(i,j) \in C} c_{i,j} + \sum_{(i,j) \in D} d_{i,j} + a k \text{.} \]
  By the assumption that $W \subseteq S$, in $G_{t, k}-S$ the $(b-1)k$ vertices of~$W'$ are isolated. Since $\alpha \big( G_{t, k}[V] \big) = \alpha(G)$, the removal of $V \cap S$ from $G_{t, k}[V]$ leaves at most $\alpha(G)$ components. By Claim~\ref{properties_of_Ht_1/2<t<1} and Proposition~\ref{obs_below1}, for any $(i,j) \in C$ the removal of $V(H^{i,j}) \cap S$ from $H^{i,j}$ leaves at most $(c_{i,j} + 1)/t$ components. By Proposition~\ref{obs_below1}, for any $(i,j) \in D$ the removal of $V(H^{i,j}) \cap S$ from $H^{i,j}$ leaves at most $d_{i,j}/t+1$ components, but the component of~$v_{i,j}$ has been already counted. Hence
  \begin{gather*} 
   \omega(G_{t, k}-S) \le (b-1) k + \alpha(G) + \sum_{(i,j) \in C} \frac{c_{i,j} + 1}{t} + \sum_{(i,j) \in D} \frac{d_{i,j}}{t} \\
   \le b k + \frac{ |C| + \sum_{(i,j) \in C} c_{i,j} + \sum_{(i,j) \in D} d_{i,j} }{t} = \frac{|S|}{t} \text{,}
  \end{gather*}
  using that $\alpha(G) \le k$.
  
  \bigskip
  
  \textit{Case 2:} $W \nsubseteq S$.
  
  Assume that $w_{j_0} \notin S$ for some $j_0 \in [ak]$. In this case, using assumption~(1), we can also assume the following.
      
  \begin{description}
   \item[] (2) There exists at most one $i \in [n]$ for which $v_{i,j_0} \in S$.
   
    Suppose that $v_{i_1,j_0}, v_{i_2,j_0} \in S$ for some $i_1,i_2 \in [n]$. By assumption~(1), the component of~$w_{j_0}$ contains all of the vertices $u_{1,j_0}, u_{2,j_0}, \ldots, u_{n,j_0}$. Now considering the cutset $S' = S \cup \{ w_{j_0} \}$ instead of~$S$ increases the number of components by at least two: it disconnects both $u_{i_1,j_0}$ and $u_{i_2,j_0}$ from the vertices $\big\{ u_{i,j_0} \bigm| i \in [n] \setminus \{i_1,i_2\} \big\}$, and it also disconnects $u_{i_1,j_0}$ from $u_{i_2,j_0}$ (and of course it can also disconnect other vertices of $\big\{ u_{i,j_0} \bigm| i \in [n] \big\}$ from each other). Then it is enough to show that $\omega(G_{t,k}-S') \le |S'|/t$ since it implies
     \[ \omega(G_{t,k} - S) \le \omega(G_{t,k} - S') - 2 \le \frac{|S'|}{t} - 2 = \frac{|S|+1}{t} - 2 < \frac{|S|}{t} \text{,} \]
    where the last inequality is valid since $t > 1/2$. Proceeding further, we can obtain a cutset $S^*$ for which $W \subseteq S^*$ holds; and such sets were already handled in Case 1.
   
   \item[] (3) $(G_{t,k} - S)[V]$ is connected.
   
    By assumption~(2), there exists at most one $i \in [n]$ for which $V_i \subseteq S$. Since~$G$ is \mbox{2-}con\-nected, this implies that $(G_{t,k} - S)[V]$ is connected.
   
   \item[] (4) There exists at most one $i \in [n]$ for which $v_{i,j_0}$ and~$u_{i,j_0}$ belong to different components in $G_{t, k}-S$.
   
    Suppose that $v_{i_1,j_0}, u_{i_1,j_0}$ belong to different components in $G_{t, k}-S$, and so do $v_{i_2,j_0}, u_{i_2,j_0}$ for some $i_1,i_2 \in [n]$. Similarly as in the proof of assumption~(2), considering the cutset $S' = S \cup \{ w_{j_0} \}$ instead of~$S$ increases the number of components by at least two, so it is enough to show that $\omega(G_{t,k}-S') \le |S'|/t$.
   
   \item[] (5) In $G_{t, k}-S$ all the remaining vertices of $\big\{ v_{i,j_0}, u_{i,j_0} \bigm| i \in [n] \big\}$ belong to the component of~$w_{j_0}$.
   
    It follows directly from assumptions~(1), (2) and~(3).
   
   \item[] (6) In $G_{t, k}-S$ all the remaining vertices of $V$ belong to the component of~$w_{j_0}$.
   
    It follows directly from assumptions~(3) and~(5).
   
   \item[] (7) In $G_{t, k}-S$ all the remaining vertices of $V \cup W$ belong to the same component.
   
    It follows directly from assumptions~(5) and~(6).
  \end{description}
   
  By assumption~(7), in $G_{t,k}-S$ there is a component containing all the remaining vertices of $V \cup W$, and every other component is either an isolated vertex of~$W'$ (since $G_{t,k}[W \cup W']$ is a bipartite graph) or a component of $H^{i,j}-\big( V(H^{i,j}) \cap S \big)$ for some $i \in [n], j \in [a k]$. Hence we can also assume the following.
   
  \begin{description}
   \item[] (8) $W' \cap S = \emptyset$.
  \end{description}
   
  By assumption~(5) and Proposition~\ref{obs_below1} and the properties of $H_{t,k}^{**}$, the removal of $W \cap S$ from $H_{t,k}^{**}$ leaves at most $|W \cap S|/t$ components, but the component of~$w_{j_0}$ has been already counted.
   
  Using the previous notations,
   \[ |S| = |C| + \sum_{(i,j) \in C} c_{i,j} + \sum_{(i,j) \in D} d_{i,j} + |W \cap S| \]
  and
  \begin{gather*} 
   \omega(G_{t, k}-S) \le 1 + \sum_{(i,j) \in C} \frac{c_{i,j} + 1}{t} + \sum_{(i,j) \in D} \frac{d_{i,j}}{t} + \left( \frac{|W \cap S|}{t} - 1 \right) \\
   = \frac{ |C| + \sum_{(i,j) \in C} c_{i,j} + \sum_{(i,j) \in D} d_{i,j} }{t} + \frac{|W \cap S|}{t} = \frac{|S|}{t} \text{.}
  \end{gather*}
  
  \bigskip
  
  This means that $\tau(G_{t, k}) \ge t$.
 \end{proof}
 
 Now we return to the proof of Theorem~\ref{minttough_dp_complete_1/2<t<1} and we show that $G$ is \mbox{$\alpha$-critical} with $\alpha(G) = k$ if and only if $G_{t,k}$ is minimally \mbox{$t$-tough}.
 
 Let us assume that $G$ is \mbox{$\alpha$-critical} with $\alpha(G) = k$. By Lemma~\ref{dp_complete_main_lemma_1/2<t<1}, the graph $G_{t,k}$ is \mbox{$t$-tough}, i.e.~$\tau(G_{t,k}) \ge t$.
 
 Let $I$ be an independent vertex set of size $\alpha(G)$ in $G_{t,k}[V]$.
 
 Recall the definition of $A^{i,j}$ from the beginning of the proof: it is the color class $A$ in the corresponding copy of $H''_t$. Let
  \[ J = \big\{ (i,j) \in [n] \times [ak] \bigm| v_{i,j} \in I \big\} \]
 and
  \[ S = \left( \bigcup_{(i,j) \notin J} A^{i,j} \right) \cup W \text{.} \]
 Then $S$ is a cutset in $G_{t,k}$ with
  \[|S| = a \big( |V| - \alpha(G) \big) + ak = a |V| \]
 and
  \[ \omega (G_{t, k} - S) = \alpha(G) + b \big( |V| - \alpha(G) \big) + (b-1)k = b |V| = \frac{|S|}{t} \text{,} \]
 so $\tau(G_{t, k}) \le t$.
 
 Therefore, $\tau(G_{t,k}) = t$.
 
 Let $e \in E(G_{t, k})$ be an arbitrary edge. We need to show that $\tau(G_{t,k} - e) < t$. Now we have four cases.
 
 \bigskip
 
 \textit{Case 1:} $e$ has an endpoint in~$U$.
 
 Then this endpoint has degree~2, so $\tau(G_{t, k} - e) \le 1/2 < t$.
 
 \bigskip
 
 \textit{Case 2:} $e$ has an endpoint in~$W'$.
 
 By the properties of $H^*_{t,k}$, there exists a cutset $S \subseteq W$ in $H_{t,k}^*-e$ for which 
  \[ \omega \big( (H^*_{t,k} - e) - S \big) > \frac{|S|}{t} \text{.} \]
 Note that $S$ is also a cutset in $G_{t, k}-e$ and
  \[ \omega \big( (G_{t, k} - e) - S \big) > \frac{|S|}{t} \text{,} \]
 so $\tau(G_{t,k} - e) < t$.

 \bigskip
 
 \textit{Case 3:} $e$ is induced by $H^{i_0,j_0}$ for some $i_0 \in [n], j_0 \in [a k]$.
 
 By Proposition~\ref{minttoughlemma}, there exists a vertex set $S \subseteq V(H'_t)$ for which 
  \[ \omega \big( (H'_t - e) - S \big) > \frac{|S|}{t} \text{.} \]
 Consider the $(i_0,j_0)$-th copy of the vertex set~$S$ in $G_{t,k}-e$; let us denote it with $S_{i_0, j_0}$. If $v_{i_0,j_0} \in S_{i_0, j_0}$, then $S_{i_0, j_0}$ is a cutset in $G_{t,k}-e$ and 
  \[ \omega \big( (G_{t,k} - e) - S_{i_0, j_0} \big) = \omega \big( (H'_t - e) - S \big) > \frac{|S|}{t} \text{,} \]
 so $\tau(G_{t,k} - e) < t$. Now assume that $v_{i_0,j_0} \notin S_{i_0, j_0}$. Let $I$ be an independent vertex set of size $\alpha(G)$ in $G_{t,k}[V]$ that contains~$v_{i_0,j_0}$ (by Proposition~\ref{alphacrit_max_indep}, such an independent vertex set exists). Let
  \[ J = \big\{ (i,j) \in [n] \times [ak] \bigm| v_{i,j} \in I \big\} \]
 and
  \[ S' = S_{i_0, j_0} \cup \left( \bigcup_{(i,j) \notin J} A^{i,j} \right) \cup W \text{.} \]
 Then $S'$ is a cutset in $G_{t,k}-e$ with
  \[ |S'| = |S| + a \big( |V| - \alpha(G) \big) + a k = |S| + a|V| \]
 and
  \[ \omega \big( (G_{t, k}-e)-S' \big) > \frac{|S|}{t} + \alpha(G) + b \big( |V| - \alpha(G) \big) + (b-1) k = \frac{|S|}{t} + b|V| = \frac{|S'|}{t} \text{,} \]
 so $\tau(G_{t, k} - e) < t$.
 
 \bigskip
 
 \textit{Case 4:} $e$ connects two vertices of~$V$.
 
 By Lemma~\ref{alpha_crit_blowup}, the graph $G_{t, k}[V]$ is \mbox{$\alpha$-critical}, so in $(G_{t, k}-e)[V]$ there exists an independent vertex set~$I$ of size $\alpha(G) + 1$. Let
  \[ J = \big\{ (i,j) \in [n] \times [ak] \bigm| v_{i,j} \in I \big\} \]
 and
  \[ S = \left( \bigcup_{(i,j) \notin J} A^{i,j} \right) \cup W \text{.} \]
 Then $S$ is a cutset in $G_{t,k}-e$ with
  \[ |S| = a \big( |V| - \alpha(G)-1 \big) + a k = a|V|-a \]
 and
  \[ \omega \big( (G_{t, k}-e)-S \big) = \alpha(G) + 1 + b \big( |V| - \alpha(G) - 1 \big) + (b-1) k = b|V| - b + 1 > \frac{|S|}{t} \text{,} \]
 so $\tau(G_{t, k} - e) < t$.

 \bigskip
 
 Therefore, if $G$ is \mbox{$\alpha$-critical} with $\alpha(G) = k$, then $G_{t,k}$ is minimally \mbox{$t$-tough}.
 
 \bigskip
 
 Now let us assume that $G$ is not \mbox{$\alpha$-critical} with $\alpha(G) = k$, i.e.~either $\alpha(G) \ne k$ or even though $\alpha(G) = k$, the graph~$G$ is not \mbox{$\alpha$-critical}.
 
 \bigskip
 
 \textit{Case I:} $\alpha(G) > k$.
 
 Let $I$ be an independent vertex set of size $\alpha(G)$ in $G_{t, k}[V]$ and let
  \[ J = \big\{ (i,j) \in [n] \times [ak] \bigm| v_{i,j} \in I \big\} \]
 and
  \[ S = \left( \bigcup_{(i,j) \notin J} A^{i,j} \right) \cup W \text{.} \]
 Then $S$ is a cutset in $G_{t,k}-e$ with
  \[ |S| = a \big( |V| - \alpha(G) \big) + a k = a |V| - a \big( \alpha(G) - k \big) \]
 and
 \begin{gather*}
  \omega(G_{t, k}-S) = \alpha(G) + b \big( |V| - \alpha(G) \big) + (b-1) k = b |V| - (b-1) \big( \alpha(G) - k \big) \\
  > b|V| - b \big( \alpha(G) - k \big) = \frac{|S|}{t} \text{,} 
 \end{gather*}
 so $\tau(G_{t,k}) < t$, which means that $G_{t,k}$ is not minimally \mbox{$t$-tough}.
 
 \bigskip
 
 \textit{Case II:} $\alpha(G) \le k$.
 
 Since $G$ is not \mbox{$\alpha$-critical} with $\alpha(G) = k$, there exists an edge $e \in E(G)$ such that $\alpha(G - e) \le k$. By Lemma~\ref{dp_complete_main_lemma_1/2<t<1}, the graph $(G-e)_{t, k}$ is \mbox{$t$-tough}, but it can be obtained from $G_{t, k}$ by edge-deletion, which means that $G_{t, k}$ is not minimally \mbox{$t$-tough}.
\end{proof}

\section{Minimally $t$-tough graphs, where $t \ge 1$} \label{section_t>=1}

This whole section resembles the previous one in structure. However, it requires some additional ideas that make the proofs more complicated. First, again, we construct some auxiliary graphs.

Let $t \ge 1$ be a rational number. It is easy to see that either $\lceil 2t \rceil = 2 \lceil t \rceil$ or $\lceil 2t \rceil = 2 \lceil t \rceil - 1$. Let $T = \lceil t \rceil$,
 \[ T' = \lceil 2t \rceil - \lceil t \rceil = \begin{cases}
                                               T   & \text{if $\lceil 2t \rceil = 2 \lceil t \rceil$,} \\
                                               T-1 & \text{if $\lceil 2t \rceil = 2 \lceil t \rceil - 1$,}
                                              \end{cases} \]
and
 \[ M = \left\lceil \frac{2 \lceil t \rceil}{\lceil 2t \rceil} \right\rceil = \begin{cases}
                                                                               1 & \text{if $\lceil 2t \rceil = 2 \lceil t \rceil$,} \\
                                                                               2 & \text{if $\lceil 2t \rceil = 2 \lceil t \rceil - 1$.}
                                                                              \end{cases} \]
Let $a, b$ be the smallest positive integers such that $b \ge 3$ and $t = a/b$.

\subsection{The auxiliary graph $H^{**}_{t,k}$ when $t \ge 1$}

Let $k$ be a positive integer that is divisible by~$a$. Note that in this case 
\[ \left( \frac{MT'}{t}-1 \right) k = \begin{cases}
                                       \frac{Tbk}{a} - k      & \text{if $\lceil 2t \rceil = 2 \lceil t \rceil$,} \\
                                       \frac{2(T-1)bk}{a} - k & \text{if $\lceil 2t \rceil = 2 \lceil t \rceil - 1$}
                                      \end{cases} \]
is a positive integer. Let 
 \[ W = \big\{ w_{j,l,m} \bigm| j \in [k], l \in [T'], m \in M \big\} \]
and 
 \[ W' = \big\{ w'_1, \ldots, w'_{(MT'/t-1) k} \big\} \text{.} \]
Place a clique on the vertices of $W$ and a complete bipartite graph on $(W;W')$. Obviously, the toughness of this complete split graph is 
 \[ \frac{kMT'}{(MT'/t-1)k} = \frac{1}{\frac{1}{t} - \frac{1}{MT'}} > t \text{.} \]
Deleting an edge may decrease the toughness, and now we delete edges incident to~$W'$ until the toughness remains at least~$t$ but the deletion of any other such edge would result in a graph with toughness less than $t$. Let $H^*_{t,k}$ denote the obtained split graph. Then $\tau(H^*_{t,k}) \ge t$, and $\tau(H^*_{t,k} - e) < t$ for any edge $e \in E(H^*_{t,k})$ incident to~$W'$, i.e.~there exists a vertex set $S=S(e) \subseteq W$ whose removal disconnects $H^*_{t,k}-e$ and
 \[ \omega \big( (H^{*}_{t,k} - e) - S \big) > \frac{|S|}{t} \text{.} \]
Now delete all the edges induced by~$W$, and let $H^{**}_{t,k}$ denote the obtained bipartite graph.

\subsection{The auxiliary graph $H''_{t,k}$ when $t \ge 1$}

Let $H_t$ be constructed as follows. Let
 \[ V'_1 = \{ v'_1, \ldots, v'_T \} \text{,} \qquad V'_2 = \{ v'_{T+1}, \ldots, v'_{2T} \} \text{,} \qquad V'_3 = \{ v'_{2T + 1}, \ldots, v'_{aT} \} \text{,} \]
 \[ V'' = \{ v''_1, \ldots, v''_T \} \text{,} \]
 \[ U'_1 = \{ u'_1, \ldots, u'_T \} \text{,} \qquad U'_2 = \{ u'_{T+1}, \ldots, u'_{2T} \} \text{,} \qquad U'_3 = \{ u'_{2T + 1}, \ldots, u'_{bT-1} \} \text{,} \]
 \[ U'' = \{ u''_1, \ldots, u''_{T'} \} \text{,} \]
and
 \[ U''_1 = \{ u''_1, \ldots, u''_T \} \text{.} \]
Place a clique on the vertices of $V'_1$, $V'_2$, $V'_3$, and $U''$. For all $l \in [T]$ connect $v''_l$ to~$v'_l$ and to~$u'_l$, and connect $v'_{T+l}$ to~$u'_{T+l}$. Connect all the vertices of~$V'_3$ to all the vertices of $V'_1 \cup V'' \cup U'_1 \cup U'_2$, and connect all the vertices of~$V'_2$ to all the vertices of~$U''$. Finally, add a new vertex~$x$ to the graph and connect it to all the vertices of $V'_1 \cup U''$. See Figure~\ref{Fig:Ht_t>=1}.

\begin{figure}[H]
 \begin{center}
 \begin{tikzpicture}
  \tikzstyle{vertex}=[draw,circle,fill=black,minimum size=4,inner sep=0]

  \begin{scope}[shift={(-0.5,0)}]
  \node[vertex] (x) at (0,1.5) [label={[xshift=0pt, yshift=0pt] $x$}] {};
  
  \draw (0,0) ellipse (1 and 0.25);
  \node[vertex] (u''1) at (-0.75,0) [label={[xshift=-2pt, yshift=0pt] $u''_{1}$}] {};
  \node[vertex] (u''2) at (-0.25,0) {};
  \draw[fill] (0.15,0) circle (0.3pt);
  \draw[fill] (0.25,0) circle (0.3pt);
  \draw[fill] (0.35,0) circle (0.3pt);
  \node[vertex] (u''T) at (0.75,0) [label={[xshift=4pt, yshift=0pt] $u''_{T'}$}] {};
  \node at (-1.75,0) {$U''$};
  
  \draw (0,-2) ellipse (1 and 0.25);
  \node[vertex] (v'T+1) at (-0.75,-2) [label={[xshift=0pt, yshift=-23pt] $v'_{T+1}$}] {};
  \node[vertex] (v'T+2) at (-0.25,-2) {};
  \draw[fill] (0.15,-2) circle (0.3pt);
  \draw[fill] (0.25,-2) circle (0.3pt);
  \draw[fill] (0.35,-2) circle (0.3pt);
  \node[vertex] (v'2T) at (0.75,-2) [label={[xshift=0pt, yshift=-22pt] $v'_{2T}$}] {};
  \node at (-1.75,-2) {$V'_2$};
  
  \draw (x) -- (u''1);
  \draw (x) -- (u''2);
  \draw (x) -- (u''T);
  
  \coordinate (A) at (-1,-0.1);
  \coordinate (B) at (1,-0.1);
  \coordinate (C) at (-1,-1.9);
  \coordinate (D) at (1,-1.9);
  
  \draw ($(A)!0.05!(C)$) -- ($(A)!0.95!(C)$);
  \draw ($(B)!0.05!(D)$) -- ($(B)!0.95!(D)$);
  \draw ($(A)!0.05!(D)$) -- ($(A)!0.95!(D)$);
  \draw ($(B)!0.05!(C)$) -- ($(B)!0.95!(C)$);
  
  \end{scope}
  
  \draw (4,0) ellipse (1 and 0.25);
  \node[vertex] (v'1) at (3.25,0) [label={[xshift=-7pt, yshift=-19pt] $v'_{1}$}] {};
  \node[vertex] (v'2) at (3.75,0) {};
  \draw[fill] (4.15,0) circle (0.3pt);
  \draw[fill] (4.25,0) circle (0.3pt);
  \draw[fill] (4.35,0) circle (0.3pt);
  \node[vertex] (v'T) at (4.75,0) [label={[xshift=7pt, yshift=-19pt] $v'_{T}$}] {};
  \node at (5.75,0) {$V'_1$};
  
  \draw (x) -- (v'1);
  \draw (x) -- (v'2);
  \draw (x) -- (v'T);
  
  \node[vertex] (v''1) at (3.25,-1) [label={[xshift=-7pt, yshift=-19pt] $v''_{1}$}] {};
  \node[vertex] (v''2) at (3.75,-1) {};
  \draw[fill] (4.15,-1) circle (0.3pt);
  \draw[fill] (4.25,-1) circle (0.3pt);
  \draw[fill] (4.35,-1) circle (0.3pt);
  \node[vertex] (v''T) at (4.75,-1) [label={[xshift=7pt, yshift=-19pt] $v''_{T}$}] {};
  \node at (5.75,-1) {$V''$};
  
  \node[vertex] (u'1) at (3.25,-2) [label={[xshift=-7pt, yshift=-19pt] $u'_{1}$}] {};
  \node[vertex] (u'2) at (3.75,-2) {};
  \draw[fill] (4.15,-2) circle (0.3pt);
  \draw[fill] (4.25,-2) circle (0.3pt);
  \draw[fill] (4.35,-2) circle (0.3pt);
  \node[vertex] (u'T) at (4.75,-2) [label={[xshift=7pt, yshift=-19pt] $u'_{T}$}] {};
  \node at (5.75,-2) {$U'_1$};
  
  \draw (v'1) -- (v''1) -- (u'1);
  \draw (v'2) -- (v''2) -- (u'2);
  \draw (v'T) -- (v''T) -- (u'T);
   
  \node[vertex] (u'T+1) at (3.25,-3.25) [label={[xshift=0pt, yshift=-20pt] $u'_{T+1}$}] {};
  \node[vertex] (u'T+2) at (3.75,-3.25) {};
  \draw[fill] (4.15,-3.25) circle (0.3pt);
  \draw[fill] (4.25,-3.25) circle (0.3pt);
  \draw[fill] (4.35,-3.25) circle (0.3pt);
  \node[vertex] (u'2T) at (4.75,-3.25) [label={[xshift=0pt, yshift=-20pt] $u'_{2T}$}] {};
  \node at (5.75,-3.25) {$U'_2$};
  
  \node[vertex] (u'2T+1) at (2.75,-4.25) [label={[xshift=-1pt, yshift=-20pt] $u'_{2T+1}$}] {};
  \node[vertex] (u'2T+2) at (3.25,-4.25) {};
  \node[vertex] (u'2T+3) at (3.75,-4.25) {};
  \draw[fill] (4.15,-4.25) circle (0.3pt);
  \draw[fill] (4.25,-4.25) circle (0.3pt);
  \draw[fill] (4.35,-4.25) circle (0.3pt);
  \node[vertex] (u'bT-2) at (4.75,-4.25) {};
  \node[vertex] (u'bT-1) at (5.25,-4.25) [label={[xshift=4pt, yshift=-20pt] $u'_{bT-1}$}] {};
  \node at (6,-4.25) {$U'_3$};
  
  \draw (u'T+1) -- (v'T+1);
  \draw (u'T+2) -- (v'T+2);
  \draw (u'2T) -- (v'2T);
  
  \draw[rounded corners, dashed] (1.5, -5.25) rectangle (6.75, 0.75) {};

  \draw (10,-2.25) ellipse (0.5 and 1.5);
  \node[vertex] (v'2T+1) at (10,-1.25) [label={[xshift=25pt, yshift=-12pt] $v'_{2T+1}$}] {};
  \draw[fill] (10,-2.15) circle (0.3pt);
  \draw[fill] (10,-2.25) circle (0.3pt);
  \draw[fill] (10,-2.35) circle (0.3pt);
  \node[vertex] (v'aT) at (10,-3.25) [label={[xshift=21pt, yshift=-12pt] $v'_{aT}$}] {};
  \node at (10,-0.25) {$V'_3$};
  
  \coordinate (E) at (6.85,-5.25);
  \coordinate (F) at (6.85,0.75);
  \coordinate (G) at (9.7,-3.8);
  \coordinate (H) at (9.7,-0.7);
  
  \draw (E) -- (G);
  \draw (F) -- (H);
  \draw ($(E)!0.05!(H)$) -- ($(E)!0.95!(H)$);
  \draw ($(F)!0.05!(G)$) -- ($(F)!0.95!(G)$);
 \end{tikzpicture}
 \caption{The graph~$H_t$, when $t \ge 1$.} \label{Fig:Ht_t>=1}
 \end{center}
\end{figure}

\begin{claim}
 For any rational number $t \ge 1$ the graph~$H_t$ has weighted toughness~$t$ with respect to the weight function~$w$ that assigns weight~1 to all the vertices of~$H_t$ except for the vertex~$x$, to which it assigns weight~$t$.
\end{claim}
\begin{proof}
 Let $S$ be an arbitrary cutset of~$H_t$. We need to show that $\omega(H_t - S) \le w(S) / t$.
 
 We can assume that either $V'_3 \cap S = \emptyset$ or $V'_3 \subseteq S$ since removing only a proper subset of~$V'_3$ does not disconnect anything from the graph. Similarly, we can also assume that either $U'' \cap S = \emptyset$ or $U'' \subseteq S$.
 
 \bigskip
 
 \textit{Case 1:} $V'_3 \cap S = \emptyset$ and $U'' \cap S = \emptyset$.
 
 Then $H_t - S$ has at most~2 components, and to obtain 2 components, the following must hold:
 \begin{enumerate}
  \item[--] $u'_{T+l} \in S$ or $v'_{T+l} \in S$ for all $l \in [T]$, and
  \item[--] $x \in S$ or $V'_1 \subseteq S$.
 \end{enumerate}
 
 Hence $w(S) \ge T + t$ and
  \[ \omega(H_t - S) = 2 \le \frac{T+t}{t} \le \frac{w(S)}{t} \text{.} \]
 
 \bigskip
 
 \textit{Case 2:} $V'_3 \cap S = \emptyset$ and $U'' \subseteq S$.
 
 Now we can assume that $x \notin S$ since after the removal of~$U''$ removing~$x$ does not disconnect anything from the graph. Similarly, we can also assume that $V'_2 \nsubseteq S$. Then $H_t - S$ has at most~3 components. To obtain three components, the following must hold:
 \begin{enumerate}
  \item[(i)] $u'_{T+l} \in S$ or $v'_{T+l} \in S$ for all $l \in [T]$ (but $V'_2 \nsubseteq S$), and
  \item[(ii)] $V'_1 \subseteq S$.
 \end{enumerate}
 Hence $w(S) \ge  T' + 2T = \lceil 2t \rceil + T$ and
  \[ \omega(H_t - S) = 3 \le \frac{\lceil 2t \rceil + T}{t} = \frac{w(S)}{t} \text{.} \]
 To obtain two components, either (i) or (ii) must hold; in both cases $w(S) \ge \lceil 2t \rceil$ and
  \[ \omega(H_t - S) = 2 \le \frac{\lceil 2t \rceil}{t} \le \frac{w(S)}{t} \text{.} \] 
 
 \bigskip
 
 \textit{Case 3:} $V'_3 \subseteq S$.
 
 First we show that the following assumptions can be made for~$S$.
 
 \begin{description}
  \item[] (1) $(U'_1 \cup U'_2 \cup U'_3) \cap S = \emptyset$.
 
   After the removal of~$V'_3$, removing any of the vertices of $U'_1 \cup U'_2 \cup U'_3$ does not disconnect anything from the graph.
 
  \item[] (2) There exists at most one $l \in [T]$ for which $v'_{T+l} \notin S$, i.e. $|V'_2 \setminus S| \le 1$.
  
   Suppose that there exist $l_1, l_2 \in [T]$ for which $l_1 \ne l_2$ and $v'_{T+l_1}, v'_{T+l_2} \notin S$. By assumption~(1), considering the cutset $S' = S \cup \{ v'_{T+l_2} \}$ instead of~$S$ increases both the number of components and the weight of the removed vertex set by~1. Hence it is enough to show that 
    \[ \omega(H_t - S') \le \frac{w(S')}{t} \]
   since it implies
    \[ \omega(H_t - S) = \omega(H_t - S') - 1 \le \frac{w(S')}{t} - 1 = \frac{w(S)+1}{t} - 1 \le \frac{w(S)}{t} \text{,} \]
   where the last inequality is valid since $t \ge 1$.
 
  \item[] (3) For all $l \in [T]$ if $v'_{l} \in S$, then $v''_l \notin S$.
 
   After the removal of $V'_3$ and $v'_l$ removing $v''_l$ does not disconnect anything from the graph.
 
  \item[] (4) For all $l \in [T]$ if $v'_{l} \notin S$, then $v''_l \in S$.
 
   Suppose that there exists $l \in [T]$ for which $v'_{l}, v''_{l} \notin S$. By assumption~(1), considering the cutset $S' = S \cup \{ v''_{l} \}$ instead of~$S$ increases both the number of components and the weight of the removed vertex set by~1. Hence, similarly as in assumption~(2), it is enough to show that 
    \[ \omega(H_t - S') \le \frac{w(S')}{t} \text{.} \]
 
  \item[] (5) $\big| (V'_1 \cup V'') \cap S \big| = T$.
 
   It follows directly from assumptions~(3) and~(4).
 \end{description}
 
 \bigskip
 
 \textit{Case 3.1:} ($V'_3 \subseteq S$ and) $U'' \subseteq S$.
 
 Now we can assume that $x \notin S$ since after the removal of~$U''$ removing~$x$ does not disconnect anything from the graph. Similarly, by assumption~(2), we can also assume that $V'_2 \nsubseteq S$, i.e. $|V'_2 \cap S| = T-1$. Hence
  \[ w(S) = |V_3'| + |U''| + \big| (V'_1 \cup V'') \cap S \big| + |V'_2 \cap S| = (aT-2T) + T' + T + (T-1) = aT + T' -1 \]
 and every component of $H_t - S$ contains exactly one of the vertices $u'_1, \ldots, u'_{bT-1}, x$, i.e.
  \[ \omega(H_t - S) = bT = \frac{aT}{t} \le \frac{aT + T' - 1}{t} = \frac{w(S)}{t} \text{.} \]
 
 \bigskip
 
 \textit{Case 3.2:} ($V'_3 \subseteq S$ and) $U'' \cap S = \emptyset$.
 
 In this case we can make some further assumptions for $S$.
 
 \begin{description}
  \item[] (6) If $V'_1 \subseteq S$, then $x \notin S$.
 
   After the removal of~$V'_1$ removing~$x$ does not disconnect anything from the graph.
 
  \item[] (7) If $V'_1 \nsubseteq S$, then $x \in S$.
 
   Suppose that $x \notin S$. Then considering the cutset $S' = S \cup \{ x \}$ instead of~$S$ increases the number of components by~1 and the weight of the removed vertex set by~$t$. Hence it is enough to show that $\omega(H_t - S') \le w(S')/t$ since it implies
    \[ \omega(H_t - S) = \omega(H_t - S') - 1 \le \frac{w(S')}{t} - 1 = \frac{w(S)+t}{t} - 1 = \frac{w(S)}{t} \text{.} \]
 
  \item[] (8) $V'_2 \subseteq S$.
 
   Suppose that $V'_2 \nsubseteq S$. Then by assumption~(2), there exists $l \in [T]$ for which $V'_2 \setminus S = \{ v'_{T+l} \}$. But by assumption~(1), considering the cutset $S' = S \cup \{ v'_{T+l} \}$ instead of~$S$ increases both the number of components and the weight of the removed vertex set by~1. Then, similarly as in assumption~(2), it is enough to show that $\omega(H_t - S') \le w(S')/t$.
 \end{description}
 
 \bigskip
 
 \textit{Case 3.2.1:} ($V'_3 \subseteq S$, $U'' \cap S = \emptyset$ and) $V'_1 \subseteq S$.
 
 Hence
  \[ w(S) = |V'_3| + |V'_2| + |V'_1| = aT \]
 and
  \[ \omega(H_t - S) = bT = \frac{w(S)}{t} \text{.} \]
 
 \bigskip
 
 \textit{Case 3.2.2:} ($V'_3 \subseteq S$, $U'' \cap S = \emptyset$ and) $V'_1 \nsubseteq S$.
 
 Hence
  \[ w(S) = |V'_3| + |V'_2| + |(V'_1 \cup V'') \cap S| + w(x) = aT + t \]
 and
  \[ \omega(H_t - S) = bT + 1 = \frac{w(S)}{t} \text{.} \]
 
 \bigskip
 
 Therefore $H_t$ is weighted \mbox{$t$-tough} with respect to~$w$ (meaning that the weighted toughness of~$H_t$ is at least~$t$).
 
 Consider the cutset
  \[ S = V'_1 \cup V'_2 \cup V'_3 \text{.} \]
 Since $w(S) = aT$ and 
  \[ \omega(H_t - S) = bT = \frac{w(S)}{t} \text{,} \]
 the weighted toughness of~$H_t$ with respect to~$w$ is at most~$t$.
 
 Thus the weighted toughness of~$H_t$ with respect to~$w$ is exactly~$t$.
\end{proof}

Deleting an edge may decrease the weighted toughness, and now we delete edges not induced by~$U''$ until the weighted toughness with respect to the weight function~$w$ remains at least~$t$ but the deletion of any other edge not induced by~$U''$ would result in a graph with weighted toughness less than~$t$. Let~$H'_t$ denote the obtained graph.

According to the following claim we could not delete the edges induced by~$V'_1$ or incident to any of the vertices of $\{ x \} \cup V'_2 \cup U''$.

\begin{claim}
 Let $t \ge 1$ be a rational number. For any edge $e \in E(H_t)$ induced by $V'_1$ or incident to any of the vertices of $\{ x \} \cup V'_2 \cup U''$, there exists a cutset $S = S(e) \subseteq V(H_t)$ in $H_t-e$ for which 
  \[ \omega \big( (H_t-e) - S \big) > \frac{w(S)}{t} \text{.} \]
\end{claim}
\begin{proof}
 Let $e \in E(H_t)$ be an arbitrary edge induced by~$V'_1$ or incident to any of the vertices of $\{ x \} \cup V'_2 \cup U''$.
 
 \bigskip
 
 \textit{Case 1:} $e$ is incident to a vertex of $\{ x \} \cup V'_2$.
 
 Let $y \in \{ x \} \cup V'_2$ denote one of the endpoints of~$e$, and let~$z$ denote the other one. Let $S$ be the neighborhood of the vertex~$y$ except for~$z$. Since $y$ has degree $\lceil 2t \rceil$ and all of its neighbors have weight~1,
  \[ w(S) = \lceil 2t \rceil - 1 \text{.} \]
 Since the removal of~$S$ from $H_t - e$ leaves the vertex~$y$ isolated,
  \[ \omega \big( (H_t - e) - S \big) \ge 2 = \frac{2t}{t} > \frac{\lceil 2t \rceil - 1}{t} = \frac{w(S)}{t} \text{.} \]
 
 \bigskip
 
 \textit{Case 2:} $e$ is incident to a vertex of~$U''$.
 
 If $e$ is incident to a vertex of~$U''$, then either it is incident to a vertex of $\{ x \} \cup V'_2$ and this case was already settled in Case~1, or it is induced by~$U''$ and therefore it was not allowed to be deleted.
 
 \bigskip
 
 \textit{Case 3:} $e$ is induced by~$V'_1$, i.e. $e = v'_{l_1} v'_{l_2}$ for some $l_1, l_2 \in [T]$, $l_1 \ne l_2$.
 
 Then
  \[ S = \big( V'_1 \setminus \{ v'_{l_1}, v'_{l_2} \} \big) \cup \{ v''_{l_1}, v''_{l_2} \} \cup V'_2 \cup V'_3 \cup \{ x \} \]
 is a cutset in $H_t - e$ such that
  \[ w(S) = (T-2) + 2 + T + (aT-2T) + t = aT + t \]
 and
  \[ \omega \big( (H_t - e) - S \big) = bT + 2 = \frac{aT + t}{t} + 1 = \frac{w(S)}{t} + 1 > \frac{w(S)}{t} \text{.} \]
\end{proof}

\begin{claim} \label{properties_of_Ht_t>=1}
 Let $t \ge 1$ be a rational number and $H''_t = H'_t - \{ x \}$. Then the following hold.
 \begin{enumerate}
  \item[(i)] The graph~$H''_t$ is connected.
  \item[(ii)] For any cutset~$S$ of~$H''_t$,
   \[ \omega(H''_t - S) \le \frac{|S|}{t} + 1 \text{.} \]
  \item[(iii)] If $V'_1 \subseteq S$ holds for a cutset~$S$ of~$H''_t$, then
   \[ \omega(H''_t - S) \le \frac{|S|}{t} \text{.} \]
  \item[(iv)] For any edge $e \in E(H''_t)$ not induced by~$U''$ there exists a vertex set $S=S(e)$ whose removal from $H''_t - e$ disconnects the graph and
   \[ \omega \big( (H''_t - e) - S \big) > \frac{|S|}{t} \text{.} \]
 \end{enumerate}
\end{claim}
\begin{proof} $\empty$
 \begin{enumerate}
  \item[(i)] Suppose to the contrary that $H''_t$ is not connected. Then $x$ is a cut-vertex in~$H'_t$. Since the weighted toughness of~$H'_t$ with respect to~$w$ is~$t$,
   \[ 2 \le \omega \big( H'_t - \{ x \} \big) \le \frac{w(x)}{t} = \frac{t}{t} = 1 \text{,} \]
  which is a contradiction.
  
  \bigskip
  
  \item[(ii)] Let $S$ be an arbitrary cutset of~$H''_t$. Since $S$ is a cutset in~$H''_t$, the vertex set $S \cup \{ x \}$ is a cutset in~$H'_t$, and
   \[ \omega(H''_t - S) = \omega \big( H'_t - (S \cup \{ x \}) \big) \le \frac{w(S \cup \{ x \})}{t} = \frac{|S|+t}{t} = \frac{|S|}{t} + 1 \text{.} \]
  
  \bigskip
  
  \item[(iii)] Let $S$ be a cutset of~$H''_t$ for which $V'_1 \subseteq S$. We can assume that $U'' \cap S = \emptyset$ since removing any of the vertices of~$U''$ from~$H''_t$ does not disconnect anything from the graph. Then all the neighbors of the vertex~$x$ belong to the same component in $H''_t - S$, so $S$ is a cutset in~$H'_t$ as well and
   \[ \omega(H''_t - S) = \omega(H'_t - S) \le \frac{w(S)}{t} = \frac{|S|}{t} \text{,} \]
  where the last equality is valid since $x \notin S$.
 
  \bigskip
 
  \item[(iv)] Let $e \in E(H''_t)$ be an arbitrary edge not induced by $U''$. Then by the properties of $H'_t$, there exists a vertex set $S \subseteq V(H'_t)$ whose removal from $H'_t - e$ disconnects the graph and
   \[ \omega \big( (H'_t-e) - S \big) > \frac{w(S)}{t} \ge \frac{|S|}{t} \text{,} \]
  where the last inequality is valid since $t \ge 1$. Let $S' = S \setminus \{ x \}$. Then
   \[ \omega \big( (H''_t-e) - S' \big) \ge \omega \big( (H'_t-e) - S \big) > \frac{|S|}{t} \ge \frac{|S'|}{t} \text{.} \]
 \end{enumerate}
\end{proof}

\subsection{The cutsets $X$ and $Y_1, \ldots, Y_{T}$ in $H''_t$ when $t \ge 1$} \label{subsection_XY}

Let 
 \[ X = V'_1 \cup V'_2 \cup V'_3 \]
and for all $l \in [T]$ let
 \[ Y_l = \big( V'_1 \setminus \{ v'_{l} \} \big) \cup \{ v''_l \} \cup V'_2 \cup V'_3 \text{.} \]

\begin{prop} \label{proposition_XY}
 The sets~$X$ and $Y_1, \ldots, Y_{T}$ are all cutsets in~$H''_t$ and 
  \[ \omega(H''_t - X) = \frac{|X|}{t} \text{,} \]
 and
  \[ \omega(H''_t - Y_l) = \frac{|Y_l|}{t} + 1\]
 for all $l \in [T]$.
\end{prop}
\begin{proof}
 It is easy to see that
 \[ \omega(H''_t - X) = bT = \frac{aT}{t} = \frac{|X|}{t} \]
 and
 \[ \omega(H''_t - Y_l) = bT + 1 = \frac{aT}{t} + 1 = \frac{|Y_l|}{t} + 1 \text{.} \]
\end{proof}

\subsection{The proof of Theorem~\ref{minttough_dp_complete} when $t \ge 1$}

\begin{thm} \label{minttough_dp_complete_t>=1}
 For any rational number $t \ge 1$, the problem {\scshape Min-\mbox{$t$-Tough}} is DP-complete.
\end{thm}
\begin{proof}
 Let $t \ge 1$ be a rational number. In Proposition~\ref{min_t_tough_DP} we already proved that the problem {\scshape Min-\mbox{$t$-Tough}} is in DP. To show that it is DP-hard, we reduce the variant of {\scshape \mbox{$\alpha$-Critical}} defined in Proposition~\ref{alphacrit_variant} to it.
 
 Let $T = \lceil t \rceil$, and $T'= \lceil 2t \rceil - \lceil t \rceil$, and $M = \big\lceil 2 \lceil t \rceil/\lceil 2t \rceil \big\rceil$ as before. Let $a,b$ be the smallest positive integers such that $b \ge 3$ and $t = a/b$, let $G$ be an arbitrary \mbox{$3$-}con\-nected graph on the vertices $v_1, \ldots, v_n$ with $n \ge t+1$, let $k$ be a positive integer that is divisible by~$a$ and let $G_{t,k}$ be defined as follows. For all $i \in [n], j \in [k], m \in [M]$ let
  \[ V_{i,j,m} = \big\{ v_{i,j,l,m} \bigm| l \in [T] \big\} \text{.} \]
 For all $i \in [n]$ let
  \[ V_i = \bigcup_{\substack{j \in [k], \\ m \in [M]}} V_{i,j,m} \]
 and place a clique on the vertices of $V_i$. For all $i_1, i_2 \in [n]$ if $v_{i_1} v_{i_2} \in E(G)$, then place a complete bipartite graph on $(V_{i_1}; V_{i_2})$. (This subgraph is denoted by $\tilde{G}$ in Figure~\ref{minttough_deterministic_constr_t>=1}.) For all $i \in [n], j \in [k], m \in [M]$ ``glue'' the graph~$H_t''$ to the vertex set $V_{i,j,m}$ by identifying $v_{i,j,l,m}$ with the vertex $v'_l$ of $H''_t$ for all $l \in [T]$. For all $i \in [n], j \in [k], m \in [M]$ let $H^{i,j,m}$, $U''_{i,j,m}$ and $X_{i,j,m}$ denote the $(i,j,m)$-th copies of $H_t''$, $U''$ and $X$, respectively. For all $i \in [n], j \in [k], l \in [T'], m \in [M]$ let $u''_{i,j,l,m}$ denote the $(i,j,m)$-th copy of $u''_l$, and for all $i \in [n], j \in [k], l \in [T], m \in [M]$ let $Y_{i,j,l,m}$ denote the $(i,j,m)$-th copy of $Y_l$. For all $j \in [k], m \in [M]$ add the vertex set
  \[ W_{j,m} = \big\{ w_{j,l,m} \bigm| l \in [T'] \big\} \]
 to the graph and for all $i \in [n], j \in [k], l \in [T'], m \in [M]$ connect $w_{j,l,m}$ to $u''_{i,j,l,m}$. Let
  \[ V = \bigcup_{i \in [n]} V_i \text{,} \qquad U'' = \bigcup_{\substack{i \in [n], \\ j \in [k], \\ m \in [M]}} U''_{i,j,m} \text{,} \qquad W = \bigcup_{\substack{j \in [k], \\ m \in [M]}} W_{j,m} \text{,} \]
 and let
  \[ U = \left( \bigcup_{\substack{i \in [n], \\ j \in [k], \\ m \in [M]}} V(H^{i,j,m}) \right) \setminus V \text{.} \]
 Add the vertex set
  \[ W' = \{ w'_1, \ldots, w'_{(MT'/t-1) k} \} \]
 to the graph and place the bipartite graph $H^{**}_{t,k}$ on $(W;W')$. See Figure~\ref{minttough_deterministic_constr_t>=1}. Now~$k$ is part of the input of the problem {\scshape \mbox{$\alpha$-Critical}}, therefore the graph $H^{**}_{t,k}$ must be constructed in polynomial time and by Theorem~\ref{split_general_thm}, this can be done. On the other hand, $t$~is not part of the input of the problem {\scshape Min-\mbox{$t$-Tough}}, therefore the graph~$H''_t$ can be constructed in advance. Hence, $G_{t,k}$ can be constructed from~$G$ in polynomial time.
 
 \begin{figure}[H]
 \begin{center}
 \begin{tikzpicture}[scale=1.25]
  \tikzstyle{vertex}=[draw,circle,fill=black,minimum size=3,inner sep=0]
   
  \draw (-0.1,0) ellipse (1.1 and 2.25);
  \draw (0,1.45) ellipse (0.5 and 0.5);
  \draw[fill] (0,0.1) circle (0.3pt);
  \draw[fill] (0,0) circle (0.3pt);
  \draw[fill] (0,-0.1) circle (0.3pt);
  \draw (0,-1.45) ellipse (0.5 and 0.5);
  
  \node at (-1,2.5) {$\tilde{G}$};
  \node at (0,1.45) {$V_{1,1,1}$};
  \node at (0,-1.45) {$V_{n,k,M}$};
  
  \draw (1,1.45) ellipse (1.6 and 0.75);
  \draw (1,-1.45) ellipse (1.6 and 0.75);
  
  \node at (1.05,1.75) {$H^{1,1,1}$};
  \node at (1.05,-1.75) {$H^{n,k,M}$};
  
  \draw (2,1.45) ellipse (0.5 and 0.5);
  \draw[fill] (0,0.1) circle (0.3pt);
  \draw[fill] (0,0) circle (0.3pt);
  \draw[fill] (0,-0.1) circle (0.3pt);
  \draw (2,-1.45) ellipse (0.5 and 0.5);
  
  \node at (2,1.45) {$U''_{1,1,1}$};
  \node at (2,-1.45) {$U''_{n,k,M}$};
  
  \node[vertex] (u1) at (2,1.8) [label={[xshift=20pt, yshift=0pt] $u''_{1,1,1,1}$}] {};
  \node[vertex] (u2) at (2,1.1) {};
  \node[vertex] (u3) at (2,-1.1) {};
  \node[vertex] (u4) at (2,-1.8) [label={[xshift=25pt, yshift=-22pt] $u''_{n,k,T',M}$}] {};
  
  \node[vertex] (w1) at (3.5,1) [label={[xshift=15pt, yshift=-7pt] $w_{1,1,1}$}] {};
  \draw[fill] (3.5,0.1) circle (0.3pt);
  \draw[fill] (3.5,0) circle (0.3pt);
  \draw[fill] (3.5,-0.1) circle (0.3pt);
  \node[vertex] (w2) at (3.5,-1) [label={[xshift=17pt, yshift=-15pt] $w_{k,T',M}$}] {};
  
  \node at (3.5,1.75) {$W$};
  
  \draw (u1) -- (w1);
  \draw (u2) -- (w2);
  \draw (u3) -- (w1);
  \draw (u4) -- (w2);
  
  \node[vertex] (w'1) at (4.5,1) [label={[xshift=15pt, yshift=-7pt] $w'_1$}] {};
  \draw[fill] (4.5,0.1) circle (0.3pt);
  \draw[fill] (4.5,0) circle (0.3pt);
  \draw[fill] (4.5,-0.1) circle (0.3pt);
  \node[vertex] (w'b) at (4.5,-1) [label={[xshift=32pt, yshift=-20pt] $w'_{(MT'/t - 1)k}$}] {};
  
  \node at (4.5,1.75) {$W'$};
  
  \draw (4,0) ellipse (1 and 1.5);
  
  \node at (4,0) {$H_{t,k}^{**}$};
 \end{tikzpicture}
 \caption{The graph $G_{t,k}$, when $t \ge 1$.} \label{minttough_deterministic_constr_t>=1}
 \end{center}
 \end{figure}
 
 To show that $G$ is \mbox{$\alpha$-critical} with $\alpha(G) = k$ if and only if $G_{t,k}$ is minimally \mbox{$t$-tough}, first we prove the following lemma.
 
 \begin{lemma} \label{dp_complete_main_lemma_t>=1}
  Let $G$ be an arbitrary \mbox{$3$-}con\-nected graph on $n \ge t+1$ vertices with $\alpha(G) \le k$. Then $G_{t,k}$ is \mbox{$t$-tough}.
 \end{lemma}
 \begin{proof}
  Let $S \subseteq V(G_{t,k})$ be a cutset in $G_{t,k}$. We need to show that $\omega(G_{t,k} - S) \le |S|/t$.
  
  \bigskip
  
  \textit{Case 1:} $W \subseteq S$.
  
  After the removal of~$W$, the vertices of~$W'$ are isolated, therefore we can assume that $W' \cap S = \emptyset$.
  
  Let
   \[ C = C(S) = \big\{ (i,j,m) \in [n] \times [k] \times [M] \bigm| V_{i,j,m} \subseteq S \big\} \text{,} \]
   \[ c_{i,j,m} = \big| V(H^{i,j,m}) \cap S \big| - T\]
  for all $(i,j,m) \in C$, and 
   \[ d_{i,j,m} = \big| V(H^{i,j,m}) \cap S \big| \]
  for all $(i,j,m) \in \big( [n] \times [k] \times [M] \big) \setminus C$. Let
   \[ D = D(S) = \left\{ (i,j,m) \in \big( [n] \times [k] \times [M] \big) \setminus C ~ \middle| ~ d_{i,j,m} > 0 \right\} \text{.} \]
  
  Using these notations it is clear that
   \[ |S| = |C| \cdot T + \sum_{(i,j,m) \in C} c_{i,j,m} + \sum_{(i,j,m) \in D} d_{i,j,m} + kMT' \text{.} \]
  By the assumption that $W \subseteq S$, in $G_{t, k}-S$ the $(MT'/t-1)k$ vertices of $W'$ are isolated. Since $\alpha \big( G_{t, k}[V] \big) = \alpha(G)$, the removal of $V \cap S$ from $G_{t, k}[V]$ leaves at most $\alpha(G)$ components. By Claim~\ref{properties_of_Ht_t>=1}, for any $(i,j,m) \in C$ the removal of $V(H^{i,j,m}) \cap S$ from $H^{i,j,m}$ leaves at most 
  \[ \max \left( \frac{|V(H^{i,j,m}) \cap S|}{t}, 1 \right) = \max \left( \frac{c_{i,j,m}+T}{t}, 1 \right) = \frac{c_{i,j,m}+T}{t} \]
  components. By Claim~\ref{properties_of_Ht_t>=1}, for any $(i,j,m) \in D$ the removal of $V(H^{i,j,m}) \cap S$ from $H^{i,j,m}$ leaves at most
  \[ \max \left( \frac{|V(H^{i,j,m}) \cap S|}{t}+1, 1 \right) = \max \left( \frac{d_{i,j,m}}{t}+1, 1 \right) = \frac{d_{i,j,m}}{t}+1 \]
  components, but the component of the remaining vertices of $V_{i,j,m}$ has been already counted. Hence
  \begin{gather*}
   \omega(G_{t, k}-S) \le \left( \frac{MT'}{t}-1 \right) k + \alpha(G) + \sum_{(i,j,m) \in C} \frac{c_{i,j,m} + T}{t} + \sum_{(i,j,m) \in D} \frac{d_{i,j,m}}{t} \\
   \le \frac{kMT' + |C| \cdot T + \sum_{(i,j,m) \in C} c_{i,j,m} + \sum_{(i,j,m) \in D} d_{i,j,m} }{t} = \frac{|S|}{t} \text{.}
  \end{gather*}
  
  \bigskip
  
  \textit{Case 2:} $W \nsubseteq S$.
  
  There are four types of components in $G_{t,k}-S$:
  \begin{enumerate}
   \item[(a)] components containing at least one vertex of~$V$,
   \item[(b)] components containing at least one vertex of~$U$ but no vertices of~$V$,
   \item[(c)] components containing at least one vertex of~$W$ but no vertices of $U \cup V$,
   \item[(d)] isolated vertices of~$W'$.
  \end{enumerate}
  
  Let $w_{j_0,l_0,m_0} \in W \setminus S$ be fixed. First we show that the following assumptions can be made for~$S$.
  
  \begin{description}
   \item[] (1) $S \cap U'' = \emptyset$.
  
    Obviously, the number of vertices of~$W$ that belong to a component of type~(c) is at most $|S \cap U''|/n$. Since the neighborhood of any vertex of~$U''$ spans a clique in $G_{t,k}$, considering the cutset
     \[ S' = (S \setminus U'') \cup \big\{ w \in W \bigm| \text{$w$ belongs to a component of type~(c)} \big\} \]
    instead of~$S$ can only increase the number of components of types~(a), (b) and~(d), while it decreases the number of components of type~(c) to 0, i.e., by at most $|S \cap U''|/n$. Hence,
     \[ |S'| \le |S| - |S \cap U''| + \frac{|S \cap U''|}{n} \]
    and
     \[ \omega(G_{t,k} - S') \ge \omega(G_{t,k} - S) - \frac{|S \cap U''|}{n} \text{.} \]
    Then it is enough to prove that $\omega(G_{t,k}-S') \le |S'|/t$ since it implies
    \begin{gather*}
     \omega(G_{t,k} - S) \le \omega(G_{t,k} - S') + \frac{|S \cap U''|}{n} \le \frac{|S'|}{t} + \frac{|S \cap U''|}{n} \\ 
     \le \frac{|S| - |S \cap U''| + |S \cap U''|/n}{t} + \frac{|S \cap U''|}{n} = \frac{|S|}{t} - |S \cap U''| \cdot \frac{n-t-1}{nt} \le \frac{|S|}{t} \text{,}
    \end{gather*}
    where the last inequality is valid since $n \ge t+1$.
    
   \item[] (2) There are no components of type~(c) in $G_{t,k}-S$.
   
    It follows directly from assumption~(1).
  
   \item[] (3) $\big| \{ i \in [n] \mid V_{i,j_0,m_0} \subseteq S \} \big| \le \lceil T'/t \rceil$.
  
    Let 
     \[ I = I(w_{j_0, l_0,m_0}, S) = \big\{ i \in [n] \bigm| V_{i,j_0, m_0} \subseteq S \big\} \]
    and suppose that $|I| \ge \lceil T'/t \rceil + 1$. By assumption~(1), the component of $w_{j_0,l_0,m_0}$ contains every vertex of $\bigcup_{i=1}^n U_{i,j_0,m_0}$ and therefore all the remaining vertices of $W_{j_0,m_0}$. Now considering the cutset
     \[ S' = S \cup \{ W_{j_0,m_0} \} \]
    instead of~$S$ increases the number of removed vertices by at most~$T'$, and it increases the number of components by at least $\lceil T'/t \rceil$ since it disconnects the vertex sets $U_{i,j_0,m_0}$, $i \in I$ from each other. Then it is enough to show that $\omega(G_{t,k}-S') \le |S'|/t$ since it implies
     \[ \omega(G_{t,k} - S) \le \omega(G_{t,k} - S') - \left\lceil \frac{T'}{t} \right\rceil \le \frac{|S'|}{t} - \left\lceil \frac{T'}{t} \right\rceil \le \frac{|S|+T'}{t} - \left\lceil \frac{T'}{t} \right\rceil \le \frac{|S|}{t} \text{.} \]
    Proceeding further, we can obtain a cutset $S^*$ for which $W \subseteq S^*$ holds; and such sets were already handled in Case 1.
  
   \item[] (4) $(G_{t,k}-S)[V]$ is connected, i.e.~there is only one component of type~(a).
  
    Since $t \ge 1$,
     \[ \left\lceil \frac{T'}{t} \right\rceil \le \left\lceil \frac{t+1}{t} \right\rceil = 1 + \left\lceil \frac{1}{t} \right\rceil \le 2 \text{.} \]
    Since $G$ is \mbox{3-}con\-nected, assumption~(2) implies that $(G_{t,k}-S)[V]$ is connected.
  \end{description}
  
  Using the previous notations,
   \[ |S| = |C| \cdot T + \sum_{(i,j,m) \in C} c_{i,j,m} + \sum_{(i,j,m) \in D} d_{i,j,m} + |S \cap (W \cup W')| \text{.} \]
  
  By assumption~(2), there are no components of type~(c), and by assumption~(4), there is only one component of type~(a). By the properties of $H^{**}_{t,k}$, the removal of $S \cap (W \cup W')$ from $H^{*}_{t,k}$ leaves at most
   \[ \max \left( \frac{|S \cap (W \cup W')|}{t}, 1 \right) \]
  components, one of them is the component of $w_{j_0,l_0,m_0}$, hence there are at most
   \[ \max \left( \frac{|S \cap (W \cup W')|}{t}, 1 \right) - 1 \le \frac{|S \cap (W \cup W')|}{t} \]
  components of type~(d). Similarly as before, for any $(i,j,m) \in C$ the removal of $V(H^{i,j,m}) \cap S$ from $H^{i,j,m}$ leaves at most 
   \[ \frac{c_{i,j,m}+T}{t} \]
  components, all of which can be of type~(b). For any $(i,j,m) \in D$ the removal of $V(H^{i,j,m}) \cap S$ from $H^{i,j,m}$ leaves at most
   \[ \frac{d_{i,j,m}}{t}+1 \]
  components; the component of the remaining vertices of $V_{i,j,m}$ is of type~(a), all the others can be of type~(b). By assumption~(1), all the vertices of $\bigcup_{i=1}^n U_{i,j_0,m_0}$ belong to the component of $w_{j_0,l_0,m_0}$, hence the component of $w_{j_0,l_0,m_0}$ has been counted multiple times (more than once). Therefore,  
   \[ \omega(G_{t,k}-S) \le \left( 1 + \sum_{(i,j,m) \in C} \frac{c_{i,j,m} + T}{t} + \sum_{(i,j,m) \in D} \frac{d_{i,j,m}}{t} + \frac{|S \cap (W \cup W')|}{t} \right) - 1 = \frac{|S|}{t} \text{.} \]
  
  \bigskip
  
  Thus, $\tau(G_{t, k}) \ge t$.
 \end{proof}
 
 Now we return to the proof of Theorem~\ref{minttough_dp_complete_t>=1} and we show that $G$ is \mbox{$\alpha$-critical} with $\alpha(G) = k$ if and only if $G_{t, k}$ is minimally \mbox{$t$-tough}.
 
 Let us assume that $G$ is \mbox{$\alpha$-critical} with $\alpha(G) = k$. By Lemma~\ref{dp_complete_main_lemma_t>=1}, the graph $G_{t, k}$ is \mbox{$t$-tough}, i.e.~$\tau(G_{t,k}) \ge t$.
 
 Let $I$ be an independent vertex set of size $\alpha(G)$ in~$G$, and recall the definition of the sets~$X$ and $Y_1, \ldots, Y_T$ constructed in Subsection~\ref{subsection_XY}. Let
  \[ J = \big\{ i \in [n] \bigm| v_i \in I \big\} \]
 and
  \[ S = \left( \bigcup_{i \in J} Y_{i,1,1,1} \right) \cup \left( \bigcup_{\substack{i \in J, \\ j \in [k] \setminus \{1\}, \\ m \in [M] \setminus \{1\}}} X_{i,j,m} \right) \cup \left( \bigcup_{\substack{i \notin J, \\ j \in [k], \\ m \in [M]}} X_{i,j,m} \right) \cup W \text{.} \]
 Then $S$ is a cutset in $G_{t,k}$ with
  \[|S| = nkMaT + kMT' \]
 and after the removal of~$S$ from $G_{t,k}$, the vertices of~$W'$ are isolated and the other components of $G_{t,k}-S$ are exactly the components of $H^{i,j,m} - \big( S \cap V(H^{i,j,m}) \big)$ for all $(i,j,m) \in [n] \times [k] \times [M]$. By Proposition~\ref{proposition_XY},
  \[ \omega(H^{i,j,m} - X_{i,j,m}) = bT \]
 and
  \[ \omega(H^{i,j,m} - Y_{i,1,1,1}) = bT+1 \]
 for all $(i,j,m) \in [n] \times [k] \times [M]$. Since $|J| = \alpha(G)$ and
  \[ |W'| = \left( \frac{MT'}{t} - 1 \right)k \text{,} \]
 it follows that
  \[ \omega (G_{t, k} - S) = nkMbT + \alpha(G) + \left( \frac{MT'}{t}-1 \right)k = nkMbT + \frac{kMT'}{t} = \frac{nkMaT + kMT'}{t} = \frac{|S|}{t} \text{,} \]
 so $\tau(G_{t, k}) \le t$.
 
 Therefore, $\tau(G_{t,k}) = t$.
 
 \bigskip
 
 Let $e \in E(G_{t, k})$ be an arbitrary edge. We need to show that $\tau(G_{t,k} - e) < t$. Now we have four cases.
 
 \bigskip
 
 \textit{Case 1:} $e$ has an endpoint in~$U''$.
 
 Then this endpoint has degree $\lceil 2t \rceil - 1$ in $G_{t,k}-e$, so 
 \[ \tau(G_{t, k} - e) \le \frac{\lceil 2t \rceil - 1}{2} < \frac{2t}{2} = t \text{.} \]
 
 \bigskip
 
 \textit{Case 2:} $e$ has an endpoint in~$W'$.
 
 By the properties of $H^*_{t,k}$, there exists a cutset $S \subseteq W$ in $H^*_{t,k}-e$ for which 
  \[ \omega \big( (H^*_{t,k} - e) - S \big) > \frac{|S|}{t} \text{.} \]
 Note that $S$ is also a cutset in $G_{t, k}-e$ and 
  \[ \omega \big( (G_{t, k} - e) - S \big) = \omega \big( (H^*_{t,k} - e) - S \big) > \frac{|S|}{t} \text{,} \]
 so $\tau(G_{t, k} - e) < t$.

 \bigskip
 
 \textit{Case 3:} $e$ is induced by $H^{i_0,j_0,m_0}$ for some $i_0 \in [n], j_0 \in [k], m_0 \in [M]$.
 
 The case when $e$ is induced by $U''_{i_0, j_0, m_0}$ was already covered in Case~1. So assume that $e$~is not induced by $U''_{i_0,j_0,m_0}$. Then by Claim~\ref{properties_of_Ht_t>=1}, there exists a vertex set $S \subseteq V(H''_t)$ for which 
  \[ \omega \big( (H''_t - e) - S \big) > \frac{|S|}{t} \text{.} \]
 Consider the $(i_0,j_0,m_0)$-th copy of the vertex set~$S$ in $G_{t,k}-e$; let us denote it with $S_{i_0, j_0,m_0}$. If $V_{i_0,j_0,m_0} \subseteq S_{i_0, j_0, m_0}$, then $S_{i_0, j_0, m_0}$ is a cutset in $G_{t,k}-e$ and
  \[ \omega \big( (G_{t,k} - e) - S_{i_0, j_0, m_0} \big) = \omega \big( (H''_t - e) - S \big) > \frac{|S|}{t} \text{,} \]
 so $\tau(G_{t, k} - e) < t$. Assume that $V_{i_0,j_0,m_0} \nsubseteq S_{i_0, j_0, m_0}$. Let $I$ be an independent vertex set of size $\alpha(G)$ in~$G$ that contains~$v_{i_0}$ (by Proposition~\ref{alphacrit_max_indep}, such an independent vertex set exists).
 Let
  \[ J = \big\{ i \in [n] \bigm| v_i \in I \big\} \]
 and
 \begin{gather*}
  S' = S_{i_0, j_0, m_0} \cup
  \left( \bigcup_{\substack{j \in [k], \\ m \in [M], \\ (j,m) \ne (j_0,m_0)}} X_{i_0,j,m} \right) \cup
  \left( \bigcup_{i \in J \setminus \{ i_0 \}} Y_{i,1,1,1} \right) \cup
  \left( \bigcup_{\substack{i \in J \setminus \{ i_0 \}, \\ j \in [k] \setminus \{1\}, \\ m \in [M] \setminus \{1\}}} X_{i,j,m} \right) \\ \cup
  \left( \bigcup_{\substack{i \notin J, \\ j \in [k], \\ m \in [M]}} X_{i,j,m} \right) \cup
  W \text{.}
 \end{gather*}
 Then $S'$ is a cutset in $G_{t,k}-e$ with
  \[ |S'| = |S| + (nkM-1)aT + kMT' \]
 and similarly as before,
  \[ \omega \big( (G_{t, k}-e)-S' \big) > \frac{|S|}{t} + (nkM-1)bT + \alpha(G) + \left( \frac{MT'}{t}-1 \right)k = \frac{|S|}{t} + (nkM-1)bT + \frac{kMT'}{t} = \frac{|S'|}{t} \text{,} \]
 so $\tau(G_{t, k} - e) < t$.
 
 \bigskip
 
 \textit{Case 4:} $e$ connects two vertices of $V$.
 
 Since the case when $e$ is induced by $H^{i_0,j_0,m_0}$ for some $i_0 \in [n], j_0 \in [k],m_0 \in [M]$ was settled in Case~3, we can assume that there do not exist $i \in [n], j \in [k], m \in [M]$ for which $e$~is induced by $V_{i,j,m}$. By Lemma~\ref{alpha_crit_blowup}, the graph $G_{t, k}[V]$ is \mbox{$\alpha$-critical}, so in $G_{t, k}[V] - e$ there exists an independent vertex set $I$ of size $\alpha(G) + 1$. Let
  \[ J = \big\{ (i,j,l,m) \in [n] \times [k] \times [T] \times [M] \bigm| v_{i,j,l,m} \in I \big\} \text{,} \]
  \[ J'_1 = \big\{ (i,j,m) \in [n] \times [k] \times [M] \bigm| \exists! l \in [T]: ~ v_{i,j,l,m} \in I \big\} \text{,} \]
 and
  \[ J'_2 = \big\{ (i,j,m) \in [n] \times [k] \times [M] \bigm| \nexists l \in [T]: ~ v_{i,j,l,m} \in I \big\} \text{.} \]
 By the assumption that there do not exist $i \in [n], j \in [k], m \in [M]$ for which $e$~is induced by $V_{i,j,m}$,
  \[ J'_1 \cup J'_2 = [n] \times [k] \times [M] \text{,} \]
 so
  \[ S = \left( \bigcup_{(i,j,l,m) \in J} Y_{i,j,l,m} \right) \cup \left( \bigcup_{(i,j,m) \in J'_2} X_{i,j,m} \right) \cup W \]
 is a (well-defined) cutset in $G_{t,k}-e$.
 Then 
  \[ |S| = nkMaT + kMT' \]
 and similarly as before,
  \[ \omega \big( (G_{t, k}-e)-S \big) = nkMbT + \alpha(G) + 1 + \left( \frac{MT'}{t}-1 \right)k = \frac{nkMaT}{t} + \frac{kMT'}{t} + 1 > \frac{|S|}{t} \text{,} \]
 so $\tau(G_{t, k} - e) < t$.
 
 \bigskip
 
 Now let us assume that $G$ is not \mbox{$\alpha$-critical} with $\alpha(G) = k$, i.e.~either $\alpha(G) \ne k$ or even though $\alpha(G) = k$, the graph~$G$ is not \mbox{$\alpha$-critical}.
 
 \bigskip
 
 \textit{Case I:} $\alpha(G) > k$.
 
 Let $I$ be an independent vertex set of size $\alpha(G)$ in~$G$. Let
  \[ J = \big\{ i \in [n] \bigm| v_i \in I \big\} \]
 and
  \[ S = \left( \bigcup_{i \in J} Y_{i,1,1,1} \right) \cup \left(\bigcup_{\substack{i \in J, \\ j \in [k] \setminus \{1\}, \\ m \in [M] \setminus \{1\}}} X_{i,j,m} \right) \cup \left( \bigcup_{\substack{i \notin J, \\ j \in [k], \\ m \in [M]}} X_{i,j,m} \right) \cup W \text{.} \]
 Then $S$ is a cutset in $G_{t,k}-e$ with
  \[ |S| = nkMaT + kMT' \]
 and similarly as before,
  \[ \omega \big( (G_{t, k}-e)-S \big) = nkMbT + \alpha(G) + \left( \frac{MT'}{t}-1 \right)k > nkMbT + \frac{kMT'}{t} = \frac{nkMaT + kMT'}{t} = \frac{|S|}{t} \text{,} \]
 so $\tau(G_{t, k}) < t$, which means that $G_{t, k}$ is not minimally \mbox{$t$-tough}.
 
 \bigskip
 
 \textit{Case II:} $\alpha(G) \le k$.
 
 Since $G$ is not \mbox{$\alpha$-critical} with $\alpha(G) = k$ there exists an edge $e \in E(G)$ such that $\alpha(G - e) \le k$. By Lemma~\ref{dp_complete_main_lemma_t>=1}, the graph $(G-e)_{t, k}$ is \mbox{$t$-tough}, but it can be obtained from $G_{t, k}$ by edge-deletion, which means that $G_{t, k}$ is not minimally \mbox{$t$-tough}.
\end{proof}

Therefore the problem {\scshape Min-$1$-Tough} is DP-complete, so by Claim~\ref{almostmin1tough_equivalent_forms}, we can conclude the following.

\begin{corollary}
 Recognizing almost minimally \mbox{1-tough} graphs is DP-complete.
\end{corollary}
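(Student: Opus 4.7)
The plan is to verify both membership in DP and DP-hardness, leaning heavily on Claim~\ref{almostmin1tough}.

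For membership in DP, I would mimic the argument of Claim~\ref{DP-beli} but in an even simpler form. Write the language as
\[ \{ G \mid \tau(G) \ge 1 \} \cap \{ G \mid \tau(G - e) < 1 \text{ for all } e \in E(G) \}. \]
The first set is in coNP, with a disqualifying witness being a cutset $S$ with $\omega(G-S) > |S|$. The second set is in NP, with witness a family $\{S_e\}_{e \in E(G)}$ of cutsets, one per edge, each certifying $\tau(G-e) < 1$. Unlike the argument for Min-$t$-Tough, no upper bound on $\tau(G)$ is imposed, so we do not need the $\varepsilon$-argument involving Proposition~\ref{a/b_prop}.

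For DP-hardness, I would reduce from Min-$1$-Tough, which has just been shown to be DP-complete. By Claim~\ref{almostmin1tough}, the class of almost minimally 1-tough graphs differs from the class of minimally 1-tough graphs only on the two exceptional graphs $K_2$ and $K_3$ (which are almost minimally 1-tough but not minimally 1-tough). Hence the reduction is almost the identity: given an instance $G$ of Min-$1$-Tough, check in polynomial time whether $G \simeq K_2$ or $G \simeq K_3$; if so, output a fixed graph known not to be almost minimally 1-tough (for instance $K_1$, which has no edges and toughness $\infty$, hence fails the critical condition vacuously while satisfying $\tau \ge 1$ — so one should instead output a fixed NO instance such as $P_3$, whose toughness is $1/2 < 1$); otherwise output $G$ itself. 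By the characterization in Claim~\ref{almostmin1tough}, $G$ is a YES instance of Min-$1$-Tough if and only if the output is a YES instance of the almost minimally 1-tough recognition problem.

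There is essentially no obstacle here: the entire argument is a consequence of the structural dichotomy already established in Claim~\ref{almostmin1tough}, together with the DP-completeness of Min-$1$-Tough proved in the previous section. The only point requiring a moment of care is the choice of the fixed NO instance output for $K_2$ and $K_3$; any small graph that is definitely not almost minimally 1-tough works.
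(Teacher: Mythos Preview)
Your argument is correct and matches the paper's reasoning. The paper in fact gives no explicit proof of this corollary: it is stated immediately after Claim~\ref{almostmin1tough} with the single sentence ``Since {\scshape Min-$1$-Tough} is DP-complete, we can conclude the following,'' relying implicitly on exactly the characterization you invoke. Your explicit treatment of DP-membership and the near-identity reduction (with the special handling of $K_2$ and $K_3$ via a fixed NO instance such as $P_3$) faithfully unpacks what the paper leaves to the reader.
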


Let {\scshape Almost-Min-$1$-Tough} denote the problem of determining whether a given graph is almost minimally \mbox{1-tough}.
 
\section{Minimally $t$-tough graphs with $t \le 1/2$} \label{section_t<=1/2}

The case when $t \le 1/2$ is special in some sense: graphs with toughness at most~$1/2$ can have cut-vertices. Unlike in the previous cases, we reduce {\scshape Almost-Min-$1$-Tough} to this problem. But first, again, we construct an auxiliary graph.

\subsection{The auxiliary graph $H'_t$ when $t \le 1/2$}

Let $t \le 1/2$ be a positive rational number. Let $a, b$ be relatively prime positive integers such that $t = a/b$ and let $H_t$ be constructed as follows. Let 
 \[ V = \{ v_1, v_2, \ldots, v_a \} \text{,} \qquad U = \{ u_1, u_2, \ldots, u_{b-a} \} \text{,} \qquad W = \{ w_1, w_2, \ldots, w_a \} \text{.} \]
Place a clique on the vertices of~$V$, connect every vertex of~$V$ to every vertex of~$U$, and connect $v_i$ to~$w_i$ for all $i \in [n]$. See Figure~\ref{Fig:Ht_t<=1/2}.

\begin{figure}[H] 
\begin{center}
\begin{tikzpicture}
 \tikzstyle{vertex}=[draw,circle,fill=black,minimum size=4,inner sep=0]
 
 \draw (0,0) ellipse (0.4 and 1.5);
 \node at (0,-2) {$K_a$};
 
 \node[vertex] (v1) at (0,0.9) {};
 \node[vertex] (v2) at (0,0.3) {};
 \draw[fill] (0,-0.2) circle (0.3pt);
 \draw[fill] (0,-0.3) circle (0.3pt);
 \draw[fill] (0,-0.4) circle (0.3pt);
 \node[vertex] (vn) at (0,-0.9) {};
 
 \node at (0,2) {$V$};
 
 \node[vertex] (w1) at (-1,0.9) {};
 \node[vertex] (w2) at (-1,0.3) {};
 \draw[fill] (-1,-0.2) circle (0.3pt);
 \draw[fill] (-1,-0.3) circle (0.3pt);
 \draw[fill] (-1,-0.4) circle (0.3pt);
 \node[vertex] (wn) at (-1,-0.9) {};
 
 \node at (-1,2) {$W$};
 
 \draw[dashed] (2,0) ellipse (0.6 and 2);
 
 \node at (2,0) {$\overline{K}_{b-a}$};
 \node at (2.5,2.25) {$U$};
 
 \draw (v1) -- (w1);
 \draw (v2) -- (w2);
 \draw (vn) -- (wn);
 
 \draw (0.05,1.55) -- (1.9,2.05);
 \draw (0.05,-1.55) -- (1.9,-2.05);
 \draw (0.25,1.45) -- (1.6,-1.85);
 \draw (0.25,-1.45) -- (1.6,1.85);
\end{tikzpicture}
\caption{The graph~$H_t$ when $t \le 1/2$.}
\label{Fig:Ht_t<=1/2}
\end{center}
\end{figure}

\begin{prop}
 Let $t \le 1/2$ be a positive rational number. Then $\tau(H_t) = t$.
\end{prop}

\begin{proof} 
 Let $S$ be an arbitrary cutset of~$H_t$. We can assume that $S \cap (U \cup W) = \emptyset$ since removing any of the vertices of $U \cup W$ does not disconnect anything in the graph. Then $S \subseteq V$, so
  \[ \omega(H_t-S) = \begin{cases}
                      a+(b-a) = b & \text{if $S = V$,} \\
                      |S| + 1     & \text{if $S \ne V$,}
                     \end{cases} \]
 which implies that
 \[ \tau(H_t) = \min \left\{ \frac{|S|}{\omega(H_t - S)} ~ \middle| ~ S \subseteq V, S \ne \emptyset \right\} = \frac{a}{b} = t \text{.} \]
\end{proof}

By repeatedly deleting some edges of~$H_t$, eventually we obtain a minimally \mbox{$t$-tough} graph; let us denote it with~$H'_t$ (i.e. if there exists an edge whose deletion does not decrease the toughness, then we delete it). Obviously, we could not delete the edges between $V$ and~$W$, so the vertices of~$W$ still have degree~1 in~$H'_t$. 

%\subsection{The set $S_t$ when $t \le 1/2$} \label{subsection_St}

Note that $V$ is a tough set of~$H'_t$. For further reference (to avoid confusion with other sets denoted by $V$), we introduce a new name for it.

\begin{notation} \label{notation_St}
 Let $S_t$ denote the tough set $V$ in $H'_t$.
\end{notation}

\subsection{``Gluing''}

\begin{defi}
 Let $H$ be a graph with a vertex~$u$ of degree~1, and let $v$ be the neighbor of~$u$. Let $G$ be an arbitrary graph, and ``glue'' $H - \{ u \}$ separately to all vertices of~$G$ by identifying each vertex of~$G$ with~$v$. Let $G \oplus_v H$ denote the obtained graph. (See Figure~\ref{Fig:gluing}.)
\end{defi}

\begin{figure}[H]
\begin{center}
\begin{tikzpicture}
 \tikzstyle{vertex}=[draw,circle,fill=black,minimum size=4,inner sep=0]
 
 \draw (0,0) ellipse (0.4 and 2);
 \node at (-0.5,2) {$G$};
 
 \node[vertex] (v1) at (0,1.35) {};
 \node[vertex] (v2) at (0,0.45) {};
 \draw[fill] (0,-0.35) circle (0.3pt);
 \draw[fill] (0,-0.45) circle (0.3pt);
 \draw[fill] (0,-0.55) circle (0.3pt);
 \node[vertex] (vn) at (0,-1.35) {};
 
 \node at (-0.65,1.35) {$v_1$};
 \node at (-0.65,0.45) {$v_2$};
 \node at (-0.65,-1.35) {$v_n$};
 
 \draw (0.7,1.35) ellipse (0.9 and 0.3);
 \draw (0.7,0.45) ellipse (0.9 and 0.3);
 \draw (0.7,-1.35) ellipse (0.9 and 0.3);
 
 \node at (2.5,1.35) {$H - \{ u \}$};
 \node at (2.5,0.45) {$H - \{ u \}$};
 \node at (2.5,-1.35) {$H - \{ u \}$};
\end{tikzpicture}
\caption{The graph $G \oplus_v H$.} \label{Fig:gluing}
\end{center}
\end{figure}

\subsection{The proof of Theorem~\ref{minttough_dp_complete} when $t \le 1/2$} \label{subsection_t<=1/2}

\begin{thm}
 For any positive rational number $t \le 1/2$, the problem {\scshape Min-\mbox{$t$-Tough}} is DP-complete.
\end{thm}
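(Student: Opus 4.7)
The plan is to reduce the problem of recognizing almost minimally 1-tough graphs, known to be DP-complete by the corollary above, to {\scshape Min-$t$-Tough}; membership in DP is already given by Claim \ref{DP-beli}. Given any graph $G$, the reduction outputs $G' = G \oplus_v H'_{a/b}$, which is polynomial-time computable since $H'_{a/b}$ is a fixed graph depending only on $t$. The goal is to show that $G'$ is minimally $t$-tough if and only if $G$ is almost minimally 1-tough.

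The ``if'' direction is exactly Lemma \ref{ragasztott_graf}. For the converse, I would suppose $G$ is not almost minimally 1-tough and exhibit a failure of minimal $t$-toughness of $G'$. If $\tau(G) < 1$, choose a cutset $S_1$ of $G$ with $\omega(G - S_1) > |S_1|$, let $S_2 = S_{a/b}$ be the tough set of $H'$ (with $u \notin S_2$), and form $S_0 = S_1 \cup \bigcup_{v_i \in S_1}(S_2^i \setminus \{v_i\})$, where $S_2^i$ denotes the $i$-th copy inside $H^i$. The same arithmetic used in the edge-deletion step of Lemma \ref{ragasztott_graf} yields $|S_0| = |S_1| \cdot |S_2|$ and $\omega(G' - S_0) \ge \omega(G - S_1) + |S_1| \left( |S_2|/t - 1 \right) > |S_0|/t$, so $\tau(G') < t$ and $G'$ fails even to be $t$-tough. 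Otherwise $\tau(G) \ge 1$ and some edge $e \in E(G)$ has $\tau(G - e) \ge 1$; the first half of the proof of Lemma \ref{ragasztott_graf} (the argument establishing $\tau(G \oplus_v H') \ge t$) uses only the 1-toughness of $G$, so applying it to $G - e$ gives $\tau(G' - e) = \tau((G - e) \oplus_v H') \ge t$, contradicting the supposed minimality.

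The main obstacle is the careful bookkeeping in the $\tau(G) < 1$ subcase: one must verify that each $S_2^i$ meets $S_1$ precisely at the glue vertex $v_i$ (so that $|S_0| = |S_1| \cdot |S_2|$ with no double counting), and that after removing $S_0$ the components produced inside the copies $H^i$ for $v_i \in S_1$ stay disjoint both from each other and from the components of $G - S_1$. Both facts follow directly from the definition of $\oplus_v$ together with $u \notin S_2$, and the remainder of the counting reuses exactly what was already done in Lemma \ref{ragasztott_graf}, so no genuinely new machinery beyond what is already in place is required.
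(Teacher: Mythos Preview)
Your proposal is correct and follows the same reduction as the paper, $G \mapsto G \oplus_v H'_{a/b}$, with the forward direction handled by Lemma~\ref{ragasztott_graf}. For the converse the paper argues directly---projecting a witnessing cutset of $G'-e$ onto $V(G)$ and redoing the $K_0$, $l_i$, $m_j$ bookkeeping---whereas you take the contrapositive and simply observe that the $\tau(G \oplus_v H') \ge t$ half of the lemma's proof uses only the $1$-toughness of $G$; this is the same computation in slightly cleaner packaging.
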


\begin{proof}
 Let $t \le 1/2$ be a positive rational number. In Proposition~\ref{min_t_tough_DP} we already proved that the problem {\scshape Min-\mbox{$t$-Tough}} is in DP. To show that it is DP-hard, we reduce {\scshape Almost-Min-$1$-Tough} to it.

 Let $G$ be an arbitrary graph and $n=|V(G)|$. Consider the graph~$H'_t$ and let $u \in U$ be an arbitrary vertex of~$H'_t$ having degree~1, and let $v$ be its neighbor. Let 
  \[ H''_t = H'_t - \{ u \} \]
 and let $H^i$ denote the $i$-th copy of~$H''_t$ ``glued'' to the vertex $v_i \in V(G)$ for all $i \in [n]$. (For examples see Figures~\ref{Fig:G1/2} and~\ref{Fig:G2/5}.)

 Now we show that $G$ is almost minimally \mbox{1-tough} if and only if $G_t=G \oplus_v H'_t$ is minimally \mbox{$t$-tough}. 

 First, let $G$ be almost minimally \mbox{1-tough}. We need to show that $G_t$~is minimally \mbox{$t$-tough}.

 Let $S \subseteq V(G_t)$ be an arbitrary cutset of~$G_t$. Let
  \[ C = C(S) = \big\{ i \in [n] \bigm| v_i \in V(G) \cap S \big\} \text{,} \]
  \[ c_i = \big| V(H^i) \cap S \big| - 1 \]
 for all $i \in C$, and 
  \[ d_i = \big| V(H^i) \cap S \big| \]
 for all $i \in [n] \setminus C$ (see~Figure~\ref{minttough_deterministic_constr_t<=1/2}). Finally, let
  \[ D = D(S) = \big\{ i \in [n] \setminus C \bigm| d_i > 0 \big\} \text{.} \]
 Using these notations it is clear that
 \[ |S| = |C| + \sum_{i \in C} c_i + \sum_{i \in D} d_i. \]
 By Proposition~\ref{obs_almostmin1tough}, the removal of $V(G) \cap S$ from~$G$ leaves at most $|V(G) \cap S| = |C|$ components. By Proposition~\ref{obs_below1}, the removal of $V(H'_t) \cap S$ from~$H'_t$ leaves at most $|V(H'_t) \cap S|/t$ components. But for all $i \in [n] \setminus C$ we have already counted the component of $G'_t-S$ which contains~$v_i$, and for all $i \in C$ we do not need to count the component~$\{ u \}$ of~$H'_t$. Hence
 \begin{gather*} 
  \omega(G_t-S) \le |C| + \sum_{i \in C} \left( \frac{c_i + 1}{t} - 1 \right) + \sum_{i \in D} \left( \frac{d_i}{t} - 1 \right) \\
  = \frac{|C| + \sum_{i \in C} c_i + \sum_{i \in D} d_i }{t} - |D| \le \frac{|S|}{t} \text{,}
 \end{gather*}
 which means that $\tau(G_t) \ge t$. 
 
 \begin{figure}[H]
 \begin{center}
 \begin{tikzpicture}[scale=1.5]
  \tikzstyle{vertex}=[draw,circle,fill=black,minimum size=4,inner sep=0]
  
  \draw (0,0) ellipse (0.6 and 3);
  \node at (-0.5,3) {$G$};
  
  \node[vertex] (v1) at (0,2.25) {};
  \node[vertex] (v2) at (0,1.35) {};
  \node[vertex] (v3) at (0,0.45) {};
  \node[vertex] (v4) at (0,-0.45) {};
  \draw[fill] (0,-1.25) circle (0.3pt);
  \draw[fill] (0,-1.35) circle (0.3pt);
  \draw[fill] (0,-1.45) circle (0.3pt);
  \node[vertex] (vn) at (0,-2.25) {};
  
  \node at (-0.85,2.25) {$v_1$};
  \node at (-0.85,1.35) {$v_2$};
  \node at (-0.85,0.45) {$v_3$};
  \node at (-0.85,-0.45) {$v_4$};
  \node at (-0.85,-2.25) {$v_n$};
  
  \draw (0.9,2.25) ellipse (1.1 and 0.3);
  \draw (0.9,1.35) ellipse (1.1 and 0.3);
  \draw (0.9,0.45) ellipse (1.1 and 0.3);
  \draw (0.9,-0.45) ellipse (1.1 and 0.3);
  \draw (0.9,-2.25) ellipse (1.1 and 0.3);
  
  \node at (2.5,2.25) {$H^1$};
  \node at (2.5,1.35) {$H^2$};
  \node at (2.5,0.45) {$H^3$};
  \node at (2.5,-0.45) {$H^4$};
  \node at (2.5,-2.25) {$H^n$};
  
  \draw (-0.15,2.25-0.15) rectangle (0.15,2.25+0.15);
  \draw (-0.15,-0.45-0.15) rectangle (0.15,-0.45+0.15);
  
  \draw (0.7,2.25-0.15) rectangle (1.5,2.25+0.15);
  \draw (0.7,1.35-0.15) rectangle (1.5,1.35+0.15);
  \draw (0.7,-0.45-0.15) rectangle (1.5,-0.45+0.15);
  \draw (0.7,-2.25-0.15) rectangle (1.5,-2.25+0.15);
  
  \node at (1.1,2.25) {$c_1$};
  \node at (1.1,1.35) {$d_2$};
  \node at (1.1,-0.45) {$c_4$};
  \node at (1.1,-2.25) {$d_n$};
 \end{tikzpicture}
 \caption{The graph~$G_t$ and the cutset $S$, when $t \le 1/2$.} \label{minttough_deterministic_constr_t<=1/2}
 \end{center}
 \end{figure}
 
 Now let $S_0$ be a tough set of~$H'_t$. Since $u$ has degree~1, we can assume that $u \notin S_0$. Let $S_0^1 \subseteq V(H^1)$ be the first copy of~$S_0$. Obviously, $S_0^1$ is a cutset in~$G_t$, and
  \[ \omega(G_t-S_0^1) = \omega(H'_t - S_0) = \frac{|S_0|}{t} = \frac{|S_0^1|}{t} \text{,} \]
 which means that $\tau(G_t) \le t$.
 
 Therefore, $\tau(G_t) = t$. 
 
 \bigskip
 
 Let $e \in E(G_t)$ be an arbitrary edge. We need to show that $\tau(G_t - e) < t$ for all $e \in E(G_t)$. Now we have two cases.
 
 \bigskip
 
 \textit{Case 1:} $e \in E(G)$.
 
 If $e$ is a bridge in~$G$, then $\tau(G_t - e) = 0 < t$. So assume that $e$ is not a bridge in~$G$. Let $S = S(e) \ne \emptyset$ be a vertex set in~$G$ guaranteed by Claim~\ref{almostmin1tough_equivalent_forms}, and for all $i \in [n]$ let $S_t^i \subseteq V(H^i)$ be the $i$-th copy of the tough set~$S_t$ defined in Notation~\ref{notation_St}. (Note that $v \in S_t$ and $u \notin S_t$.) Let 
  \[ J = J(S) = \big\{ i \in [n] \bigm| v_i \in S \big\} \]
 and consider the vertex set
  \[ S' = S \cup \left( \bigcup_{i \in J} S_t^i \right) = \bigcup_{i \in J} S_t^i \text{.} \]
 Then $S'$ is a cutset in $G_t-e$ with
  \[ |S'| = \sum_{i \in J} |S_t^i| = |S| \cdot |S_t| \]
 and
  \[ \omega \big( (G_t-e)-S' \big) > |S| + |S| \left( \frac{|S_t|}{t} - 1 \right) = \frac{|S| \cdot |S_t|}{t} = \frac{|S'|}{t} \text{,} \]
 which means that $\tau(G_t - e) < t$.
 
 \bigskip
 
 \textit{Case 2:} $e \in E(H^{i_0})$ for some $i_0 \in [n]$.
 
 If $e$ is a bridge in $H'_t$, then $\tau(G_t - e) = 0 < t$. So assume that $e$ is not a bridge in~$H'_t$ and let $S = S(e) \ne \emptyset$ be a vertex set in~$H'_t$ guaranteed by Proposition~\ref{minttoughlemma}. Again, since $u$ has degree~1, we can assume that $u \notin S$. Let $S^{i_0} \subseteq V(H^{i_0})$ be the $i_0$-th copy of~$S$. Obviously, $S^{i_0}$ is a cutset in $G_t-e$ and
  \[ \omega \big( (G_t - e) - S^{i_0} \big) = \omega \big( (H'_t - e) - S \big) > \frac{|S|}{t} = \frac{|S^{i_0}|}{t} \text{,} \]
 which means that $\tau(G_t - e) < t$.
 
 \bigskip
 
 Therefore, the graph~$G_t$ is minimally \mbox{$t$-tough}.

 \bigskip

 Now we show that if $G_t$ is minimally \mbox{$t$-tough}, then $G$ is almost minimally \mbox{1-tough}.

 First, we prove that $\tau(G) \ge 1$. Suppose to the contrary that $\tau(G) < 1$. Obviously, $G$ must be connected (otherwise $\tau(G_t) = 0 \ne t$), so there exists a cutset $S \subseteq V(G)$ in $G$ satisfying 
  \[\omega(G-S) > |S| \text{.} \]
 For all $i \in [n]$ let $S_t^i \subseteq V(H^i)$ be the $i$-th copy of the tough set~$S_t$ defined in Notation~\ref{notation_St}. (Note that $v \in S_t$ and $u \notin S_t$.) Let 
  \[ J = J(S) = \big\{ i \in [n] \bigm| v_i \in S \big\} \]
 and consider the vertex set
  \[ S' = S \cup \left( \bigcup_{i \in J} S_t^i \right) = \bigcup_{i \in J} S_t^i \text{.} \]
 Then $S'$ is a cutset in~$G_t$ with
  \[ |S'| = \sum_{i \in J} |S_t^i| = |S| \cdot |S_t| \]
 and
  \[ \omega(G_t-S') > |S| + |S| \left( \frac{|S_t|}{t} - 1 \right) = \frac{|S| \cdot |S_t|}{t} = \frac{|S'|}{t} \text{,} \]
 which means that $\tau(G_t) < t$ and that is a contradiction. So $\tau(G) \ge 1$.

 Now we prove that $\tau(G - e) < 1$ for all $e \in E(G)$. Let $e \in E(G)$ be an arbitrary edge. If $e$ is a bridge in~$G$, then $\tau(G - e) = 0 < 1$. Let us assume that $e$ is not a bridge in~$G$. Then $e$ is not a bridge in~$G_t$ either. Let $S = S(e) \ne \emptyset$ be a vertex set guaranteed by Proposition~\ref{minttoughlemma}. Consider the vertex set $S_0 = S \cap V(G)$. Since $e$ is a bridge in $G-S_0$ as well, $S_0$~is a cutset in $G-e$. Let
  \[ C = C(S)= \big\{ i \in [n] \bigm| v_i \in S_0 \big\} \text{,} \]
  \[ c_i = \big| V(H^i) \cap S \big| - 1 \]
 for all $i \in C$ and 
  \[ d_i = \big| V(H^i) \cap S \big| \]
 for all $i \in [n] \setminus C$. Let
  \[ D = D(S) = \big\{ i \in [n] \setminus C \bigm| d_i > 0 \big\} \text{.} \]
 Then
  \[ \omega \big( (G-e) - S_0 \big) > |S_0| = |C| \]
 must hold, otherwise, similarly as before,
 \begin{gather*}
  \omega \big( (G'-e) - S \big) \le |C| + \sum_{i \in C} \left( \frac{c_i + 1}{t} - 1 \right) + \sum_{i \in D} \left( \frac{d_i}{t} - 1 \right) \\
  = \frac{ |C| + \sum_{i \in C} c_i + \sum_{i \in D} d_i }{t} - |D| \le \frac{|S|}{t} \text{,}
 \end{gather*}
 which is a contradiction. So $\tau(G-e) < 1$.

 Therefore, $G$ is almost minimally \mbox{1-tough}.
\end{proof}

\section{Conclusion}

In this paper we proved that recognizing minimally \mbox{$t$-tough} graphs is DP-complete for any positive rational number~$t$. On the other hand, in~\cite{spec_graphclasses} we proved that in some special graph classes, this problem belongs to~P:
\begin{enumerate}
 \item[--] in the class of split graphs,
 \item[--] in the class of \mbox{claw-free} graphs when $t \le 1$, and 
 \item[--] in the class of \mbox{$2K_2$-free} graphs.
\end{enumerate}
These results are not really surprising since the toughness of split, or \mbox{claw-free}, or \mbox{$2K_2$-free} graphs can be computed in polynomial time, see~\cite{split_general}, \cite{clawfree}, and~\cite{2K2}, respectively.

It is also known that recognizing \mbox{$t$-tough} bipartite graphs is coNP-complete for any positive rational number $t \le 1$, see~\cite{recognize_toughness_bipartite} and~\cite{exact_toughness}, but determining the complexity of recognizing minimally \mbox{$t$-tough}, bipartite graphs is still open.

\section*{Acknowledgment}
The research of the first author was supported by National Research, Development and Innovation Office NKFIH, K-116769 and K-124171, by the National Research, Development and Innovation Fund (TUDFO/51757/2019-ITM, Thematic Excellence Program), and by the Higher
Education Excellence Program of the Ministry of Human Capacities in the frame of Artificial Intelligence research area of Budapest
University of Technology and Economics (BME FIKP-MI/SC).
The research of the second author was supported by National Research, Development and Innovation Office NKFIH, K-111827.
The research of the third author was supported by National Research, Development and Innovation Office NKFIH, K-124171.

\appendix
  
\section*{Appendix}

\begin{figure}[H]
 \begin{center}
 \begin{tikzpicture}
  \tikzstyle{vertex}=[draw,circle,fill=black,minimum size=3,inner sep=0]
  
  \node[vertex] (v1) at (0,1) [label=above:{$v_1$}] {};
  \node[vertex] (v2) at (0,0) [label=above right:{$v_2$}] {};
  \node[vertex] (v3) at (0,-1) [label=below:{$v_3$}] {};
  
  \node[vertex] (u1) at (1,1) [label=above:{$u_1$}] {};
  \node[vertex] (u2) at (1,0) [label=above:{$u_2$}] {};
  \node[vertex] (u3) at (1,-1) [label=below:{$u_3$}] {};
  
  \node[vertex] (w) at (2,0) [label=right:{$w$}] {};
  
  \draw[ultra thick] (v1) -- (v2);
  \draw[ultra thick] (v2) -- (v3);
  \draw[ultra thick] (v3) to [bend left=60] (v1);
  
  \draw (v1) -- (u1);
  \draw (v2) -- (u2);
  \draw (v3) -- (u3);
  
  \draw (u1) -- (w);
  \draw (u2) -- (w);
  \draw (u3) -- (w);
 \end{tikzpicture}
 \caption{The minimally 1-tough graph~$G'$ constructed in the beginning of Section~\ref{section_special_cases}, when $G \simeq K_3$. The edges of $K_3$ are drawn with thick lines.}
 \label{Fig:G'}
 \end{center}
\end{figure}

\begin{figure}[H]
 \begin{center}
 \begin{tikzpicture}
  \tikzstyle{vertex}=[draw,circle,fill=black,minimum size=3,inner sep=0]
  
  \node[vertex] (v111) at (0,4.25) [label=above right:{$v_{1,1,1}$}] {};
  \node[vertex] (v121) at (0,3.75) [label=below right:{$v_{1,2,1}$}] {};
  \node[vertex] (v211) at (0,2.25) [label=above right:{$v_{2,1,1}$}] {};
  \node[vertex] (v221) at (0,1.75) [label=below right:{$v_{2,2,1}$}] {};
  \node[vertex] (v311) at (0,0.25) [label=above right:{$v_{3,1,1}$}] {};
  \node[vertex] (v321) at (0,-0.25) [label=below right:{$v_{3,2,1}$}] {};
  \node[vertex] (v411) at (0,-1.75) [label=above right:{$v_{4,1,1}$}] {};
  \node[vertex] (v421) at (0,-2.25) [label=below right:{$v_{4,2,1}$}] {};
  \node[vertex] (v511) at (0,-3.75) [label=above right:{$v_{5,1,1}$}] {};
  \node[vertex] (v521) at (0,-4.25) [label=below right:{$v_{5,2,1}$}] {};
  
  \node[vertex] (u111) at (3,4.25) [label=above left:{$u_{1,1,1}$}] {};
  \node[vertex] (u121) at (3,3.75) [label=below left:{$u_{1,2,1}$}] {};
  \node[vertex] (u211) at (3,2.25) [label=above left:{$u_{2,1,1}$}] {};
  \node[vertex] (u221) at (3,1.75) [label=below left:{$u_{2,2,1}$}] {};
  \node[vertex] (u311) at (3,0.25) [label=above left:{$u_{3,1,1}$}] {};
  \node[vertex] (u321) at (3,-0.25) [label=below left:{$u_{3,2,1}$}] {};
  \node[vertex] (u411) at (3,-1.75) [label=above left:{$u_{4,1,1}$}] {};
  \node[vertex] (u421) at (3,-2.25) [label=below left:{$u_{4,2,1}$}] {};
  \node[vertex] (u511) at (3,-3.75) [label=above left:{$u_{5,1,1}$}] {};
  \node[vertex] (u521) at (3,-4.25) [label=below left:{$u_{5,2,1}$}] {};
  
  \node[vertex] (w11) at (7,0.25) [label=right:{$w_{1,1}$}] {};
  \node[vertex] (w21) at (7,-0.25) [label=right:{$w_{2,1}$}] {};
  
  \draw[ultra thick] (v111) -- (v121);
  \draw[ultra thick] (v211) -- (v221);
  \draw[ultra thick] (v311) -- (v321);
  \draw[ultra thick] (v411) -- (v421);
  \draw[ultra thick] (v511) -- (v521);
  
  \draw[ultra thick] (v111) to [bend right=60] (v211);
  \draw[ultra thick] (v111) to [bend right=60] (v221);
  \draw[ultra thick] (v121) to [bend right=60] (v211);
  \draw[ultra thick] (v121) to [bend right=60] (v221);
  
  \draw[ultra thick] (v211) to [bend right=60] (v311);
  \draw[ultra thick] (v211) to [bend right=60] (v321);
  \draw[ultra thick] (v221) to [bend right=60] (v311);
  \draw[ultra thick] (v221) to [bend right=60] (v321);
  
  \draw[ultra thick] (v311) to [bend right=60] (v411);
  \draw[ultra thick] (v311) to [bend right=60] (v421);
  \draw[ultra thick] (v321) to [bend right=60] (v411);
  \draw[ultra thick] (v321) to [bend right=60] (v421);
  
  \draw[ultra thick] (v411) to [bend right=60] (v511);
  \draw[ultra thick] (v411) to [bend right=60] (v521);
  \draw[ultra thick] (v421) to [bend right=60] (v511);
  \draw[ultra thick] (v421) to [bend right=60] (v521);
  
  \draw[ultra thick] (v511) to [bend left=60] (v111);
  \draw[ultra thick] (v511) to [bend left=60] (v121);
  \draw[ultra thick] (v521) to [bend left=60] (v111);
  \draw[ultra thick] (v521) to [bend left=60] (v121);

  \draw (v111) -- (u111);
  \draw (v121) -- (u121);
  \draw (v211) -- (u211);
  \draw (v221) -- (u221);
  \draw (v311) -- (u311);
  \draw (v321) -- (u321);
  \draw (v411) -- (u411);
  \draw (v421) -- (u421);
  \draw (v511) -- (u511);
  \draw (v521) -- (u521);
  
  \draw (u111) -- (w11);
  \draw (u211) -- (w11);
  \draw (u311) -- (w11);
  \draw (u411) -- (w11);
  \draw (u511) -- (w11);
  
  \draw (u121) -- (w21);
  \draw (u221) -- (w21);
  \draw (u321) -- (w21);
  \draw (u421) -- (w21);
  \draw (u521) -- (w21);
 \end{tikzpicture}
 \caption{The graph~$G'_{1,2}$ constructed in~Subsection~\ref{subsection_integer}, when $G \simeq C_5$. Since the graph $C_5$ is connected and $\alpha$-critical with $\alpha(C_5)=2$, the choice $k=2$ results in a minimally 1-tough graph. The edges of the ``blown-up'' $C_5$ are drawn with thick lines.}
 \label{Fig:G'12}
 \end{center}
\end{figure}

\begin{figure}[H]
 \begin{center}
 \begin{tikzpicture}
  \tikzstyle{vertex}=[draw,circle,fill=black,minimum size=3,inner sep=0]
  
  \node[vertex] (v111) at (0,2.25) [label=above right:{$v_{1,1,1}$}] {};
  \node[vertex] (v112) at (0,1.75) [label=below right:{$v_{1,1,2}$}] {};
  \node[vertex] (v211) at (0,0.25) [label=above right:{$v_{2,1,1}$}] {};
  \node[vertex] (v212) at (0,-0.25) [label=below right:{$v_{2,1,2}$}] {};
  \node[vertex] (v311) at (0,-1.75) [label=above right:{$v_{3,1,1}$}] {};
  \node[vertex] (v312) at (0,-2.25) [label=below right:{$v_{3,1,2}$}] {};
  
  \node[vertex] (u111) at (3,2.25) [label=above left:{$u_{1,1,1}$}] {};
  \node[vertex] (u112) at (3,1.75) [label=below left:{$u_{1,1,2}$}] {};
  \node[vertex] (u211) at (3,0.25) [label=above left:{$u_{2,1,1}$}] {};
  \node[vertex] (u212) at (3,-0.25) [label=below left:{$u_{2,1,2}$}] {};
  \node[vertex] (u311) at (3,-1.75) [label=above left:{$u_{3,1,1}$}] {};
  \node[vertex] (u312) at (3,-2.25) [label=below left:{$u_{3,1,2}$}] {};
  
  \node[vertex] (w11) at (5,0.25) [label=right:{$w_{1,1}$}] {};
  \node[vertex] (w12) at (5,-0.25) [label=right:{$w_{1,2}$}] {};
  
  \draw[ultra thick] (v111) -- (v112);
  \draw[ultra thick] (v211) -- (v212);
  \draw[ultra thick] (v311) -- (v312);
  
  \draw (u111) -- (u112);
  \draw (u211) -- (u212);
  \draw (u311) -- (u312);
  
  \draw[ultra thick] (v111) to [bend right=60] (v211);
  \draw[ultra thick] (v111) to [bend right=60] (v212);
  \draw[ultra thick] (v112) to [bend right=60] (v211);
  \draw[ultra thick] (v112) to [bend right=60] (v212);
  
  \draw[ultra thick] (v211) to [bend right=60] (v311);
  \draw[ultra thick] (v211) to [bend right=60] (v312);
  \draw[ultra thick] (v212) to [bend right=60] (v311);
  \draw[ultra thick] (v212) to [bend right=60] (v312);
  
  \draw[ultra thick] (v311) to [bend left=60] (v111);
  \draw[ultra thick] (v311) to [bend left=60] (v112);
  \draw[ultra thick] (v312) to [bend left=60] (v111);
  \draw[ultra thick] (v312) to [bend left=60] (v112);
  
  \draw (v111) -- (u111);
  \draw (v112) -- (u112);
  \draw (v211) -- (u211);
  \draw (v212) -- (u212);
  \draw (v311) -- (u311);
  \draw (v312) -- (u312);
  
  \draw (u111) -- (w11);
  \draw (u211) -- (w11);
  \draw (u311) -- (w11);
  
  \draw (u112) -- (w12);
  \draw (u212) -- (w12);
  \draw (u312) -- (w12);
 \end{tikzpicture}
 \caption{The graph~$G'_{2,1}$ constructed in~Subsection~\ref{subsection_integer}, when $G \simeq K_3$. Since the graph $K_3$ is 2-connected and $\alpha$-critical with $\alpha(K_3)=1$, the choice $k=1$ results in a minimally 2-tough graph. The edges of the ``blown-up'' $K_3$ are drawn with thick lines.}
 \label{Fig:G'21}
 \end{center}
\end{figure}

\begin{figure}[H]
 \begin{center}
 \begin{tikzpicture}
  \tikzstyle{vertex}=[draw,circle,fill=black,minimum size=3,inner sep=0]
  
  \node[vertex] (v1) at (90:0.5) [label=above right:{$v_1$}] {};
  \node[vertex] (v2) at (210:0.5) [label=above left:{$v_2$}] {};
  \node[vertex] (v3) at (330:0.5) [label=above right:{$v_3$}] {};
  
  \node[vertex] (u1) at (90:1.25) [label=above right:{$u_1$}] {};
  \node[vertex] (u2) at (210:1.25) [label=above left:{$u_2$}] {};
  \node[vertex] (u3) at (330:1.25) [label=above right:{$u_3$}] {};
  
  \draw[ultra thick] (v1) -- (v2) -- (v3) -- (v1);
  \draw (v1) -- (u1);
  \draw (v2) -- (u2);
  \draw (v3) -- (u3);
 \end{tikzpicture}
 \caption{The graph~$G_{1/2}$ constructed in~Subsection~\ref{subsection_1/b}, when $G \simeq K_3$. Since the graph $K_3$ is almost minimally 1-tough, this graph is minimally $1/2$-tough. The edges of $K_3$ are drawn with thick lines.}
 \label{Fig:G1/2}
 \end{center}
\end{figure}

\begin{figure}[H]
 \begin{center}
 \begin{tikzpicture}
  \tikzstyle{vertex}=[draw,circle,fill=black,minimum size=3,inner sep=0]
  
  \node[vertex] (v1) at (90:0.5) {};
  \node[vertex] (v12) at ($(v1)+(0:0.5)$) {};
  \node[vertex] (u11) at ($(v1)-(0:0.25)+(90:0.5)$) {};
  \node[vertex] (u12) at ($(v1)+(0:0.25)+(90:0.5)$) {};
  \node[vertex] (u13) at ($(v1)+(0:0.75)+(90:0.5)$) {};
  \node[vertex] (u14) at ($(v1)+(0:1.25)+(90:0.25)$) {};
  
  \node[vertex] (v2) at (210:0.5) {};
  \node[vertex] (v22) at ($(v2)+(120:0.5)$) {};
  \node[vertex] (u21) at ($(v2)-(120:0.25)+(210:0.5)$) {};
  \node[vertex] (u22) at ($(v2)+(120:0.25)+(210:0.5)$) {};
  \node[vertex] (u23) at ($(v2)+(120:0.75)+(210:0.5)$) {};
  \node[vertex] (u24) at ($(v2)+(120:1.25)+(210:0.25)$) {};
  
  \node[vertex] (v3) at (330:0.5) {};
  \node[vertex] (v32) at ($(v3)+(240:0.5)$) {};
  \node[vertex] (u31) at ($(v3)-(240:0.25)+(330:0.5)$) {};
  \node[vertex] (u32) at ($(v3)+(240:0.25)+(330:0.5)$) {};
  \node[vertex] (u33) at ($(v3)+(240:0.75)+(330:0.5)$) {};
  \node[vertex] (u34) at ($(v3)+(240:1.25)+(330:0.25)$) {};
  
  \draw[ultra thick] (v1) -- (v2) -- (v3) -- (v1);
  
  \draw (v1) -- (u11);
  \draw (v1) -- (u12);
  \draw (v1) -- (u13);
  \draw (v12) -- (u11);
  \draw (v12) -- (u12);
  \draw (v12) -- (u13);
  \draw (v12) -- (u14);
  
  \draw (v2) -- (u21);
  \draw (v2) -- (u22);
  \draw (v2) -- (u23);
  \draw (v22) -- (u21);
  \draw (v22) -- (u22);
  \draw (v22) -- (u23);
  \draw (v22) -- (u24);
  
  \draw (v3) -- (u31);
  \draw (v3) -- (u32);
  \draw (v3) -- (u33);
  \draw (v32) -- (u31);
  \draw (v32) -- (u32);
  \draw (v32) -- (u33);
  \draw (v32) -- (u34);
 \end{tikzpicture}
 \caption{The graph~$G_{2/5}$ constructed in~Subsection~\ref{subsection_t<=1/2}, when $G \simeq K_3$. Since the graph $K_3$ is almost minimally 1-tough, this graph is minimally $2/5$-tough. The edges of $K_3$ are drawn with thick lines.}
 \label{Fig:G2/5}
 \end{center}
\end{figure}

\begin{figure}[H]
 \begin{center}
 \begin{tikzpicture}
  \tikzstyle{vertex}=[draw,circle,fill=black,minimum size=3,inner sep=0]
  
  \node[vertex] (v111) at (0,3.5) {};
  \node[vertex] (v112) at (0,2.5) {};
  \node[vertex] (v211) at (0,0.5) {};
  \node[vertex] (v212) at (0,-0.5) {};
  \node[vertex] (v311) at (0,-2.5) {};
  \node[vertex] (v312) at (0,-3.5) {};
  
  \node[vertex] (a111) at (0.25,3.75) {};
  \node[vertex] (b111) at (0.25,3.25) {};
  \node[vertex] (c111) at (0.5,3.5) {};
  \node[vertex] (a112) at (0.25,2.75) {};
  \node[vertex] (b112) at (0.25,2.25) {};
  \node[vertex] (c112) at (0.5,2.5) {};
  \node[vertex] (a211) at (0.25,0.75) {};
  \node[vertex] (b211) at (0.25,0.25) {};
  \node[vertex] (c211) at (0.5,0.5) {};
  \node[vertex] (a212) at (0.25,-0.25) {};
  \node[vertex] (b212) at (0.25,-0.75) {};
  \node[vertex] (c212) at (0.5,-0.5) {};
  \node[vertex] (a311) at (0.25,-2.25) {};
  \node[vertex] (b311) at (0.25,-2.75) {};
  \node[vertex] (c311) at (0.5,-2.5) {};
  \node[vertex] (a312) at (0.25,-3.25) {};
  \node[vertex] (b312) at (0.25,-3.75) {};
  \node[vertex] (c312) at (0.5,-3.5) {};
  
  \node[vertex] (u111) at (3,3.5) {};
  \node[vertex] (u112) at (3,2.5) {};
  \node[vertex] (u211) at (3,0.5) {};
  \node[vertex] (u212) at (3,-0.5) {};
  \node[vertex] (u311) at (3,-2.5) {};
  \node[vertex] (u312) at (3,-3.5) {};
  
  \node[vertex] (w11) at (5,0.5) {};
  \node[vertex] (w12) at (5,-0.5) {};
  \node[vertex] (w'11) at (6,0.5) {};
  \node[vertex] (w'12) at (6,-0.5) {};
  
  \draw[ultra thick] (v111) -- (v112);
  \draw[ultra thick] (v211) -- (v212);
  \draw[ultra thick] (v311) -- (v312);
  
  \draw (u111) -- (u112);
  \draw (u211) -- (u212);
  \draw (u311) -- (u312);
  
  \draw[ultra thick] (v111) to [bend right=60] (v211);
  \draw[ultra thick] (v111) to [bend right=60] (v212);
  \draw[ultra thick] (v112) to [bend right=60] (v211);
  \draw[ultra thick] (v112) to [bend right=60] (v212);
  
  \draw[ultra thick] (v211) to [bend right=60] (v311);
  \draw[ultra thick] (v211) to [bend right=60] (v312);
  \draw[ultra thick] (v212) to [bend right=60] (v311);
  \draw[ultra thick] (v212) to [bend right=60] (v312);
  
  \draw[ultra thick] (v311) to [bend left=60] (v111);
  \draw[ultra thick] (v311) to [bend left=60] (v112);
  \draw[ultra thick] (v312) to [bend left=60] (v111);
  \draw[ultra thick] (v312) to [bend left=60] (v112);
  
  \draw (v111) -- (a111);
  \draw (v111) -- (b111);
  \draw (v112) -- (a112);
  \draw (v112) -- (b112);
  \draw (v211) -- (a211);
  \draw (v211) -- (b211);
  \draw (v212) -- (a212);
  \draw (v212) -- (b212);
  \draw (v311) -- (a311);
  \draw (v311) -- (b311);
  \draw (v312) -- (a312);
  \draw (v312) -- (b312);
  
  \draw (a111) -- (c111);
  \draw (b111) -- (c111);
  \draw (a112) -- (c112);
  \draw (b112) -- (c112);
  \draw (a211) -- (c211);
  \draw (b211) -- (c211);
  \draw (a212) -- (c212);
  \draw (b212) -- (c212);
  \draw (a311) -- (c311);
  \draw (b311) -- (c311);
  \draw (a312) -- (c312);
  \draw (b312) -- (c312);
  
  \draw (c111) -- (u111);
  \draw (c112) -- (u112);
  \draw (c211) -- (u211);
  \draw (c212) -- (u212);
  \draw (c311) -- (u311);
  \draw (c312) -- (u312);
  
  \draw (u111) -- (w11);
  \draw (u211) -- (w11);
  \draw (u311) -- (w11);
  
  \draw (u112) -- (w12);
  \draw (u212) -- (w12);
  \draw (u312) -- (w12);
  
  \draw (w11) -- (w'11);
  \draw (w11) -- (w'12);
  \draw (w12) -- (w'11);
  \draw (w12) -- (w'12);
 \end{tikzpicture}
 \caption{The graph~$G_{2/3,1}$ constructed in~Subsection~\ref{subsection_1/2<t<1}, when $G \simeq K_3$. Since the graph $K_3$ is 2-connected and $\alpha$-critical with $\alpha(K_3)=1$, the choice $k=1$ results in a minimally 2/3-tough graph. The edges of the ``blown-up'' $K_3$ are drawn with thick lines.}
 \label{Fig:G2/3,1}
 \end{center}
\end{figure}

\begin{bibdiv}
\begin{biblist}
\bib{recognize_toughness}{article}{
 title={Recognizing tough graphs is NP-hard},
 author={D. Bauer},
 author={S. L. Hakimi},
 author={E. Schmeichel},
 journal={Discrete Applied Mathematics},
 volume={28},
 pages={191--195},
 date={1990},
}

\bib{2K2}{article}{
 title={On toughness and hamiltonicity of $2K_2$-free graphs},
 author={H. J. Broersma},
 author={V. Patel},
 author={A. Pyatkin},
 journal={Journal of Graph Theory},
 volume={75},
 pages={244--255},
 date={2014},
}

\bib{toughness_intro}{article}{
 title={Tough graphs and hamiltonian circuits},
 author={V. Chv\'{a}tal},
 journal={Discrete Mathematics},
 volume={5},
 pages={215--228},
 date={1973},
}

\bib{kriesell}{misc}{
 title={Problems from the workshop on dominating cycles, \url{http://iti.zcu.cz/history/2003/Hajek/problems/hajek-problems.ps}},
 author={T. Kaiser},
}

\bib{min1tough_article}{article}{
 title={Properties of minimally $t$-tough graphs},
 author={G. Y. Katona},
 author={D. Solt\'esz},
 author={K. Varga},
 journal={Discrete Mathematics},
 volume={341},
 pages={221--231},
 date={2018},
}

\bib{spec_graphclasses}{article}{
 title={Minimally toughness in special graph classes, \url{https://arxiv.org/abs/1802.00055}},
 author={G. Y. Katona},
 author={K. Varga},
}

\bib{exact_toughness}{article}{
 title={Strengthening some complexity results on toughness of graphs, \url{https://arxiv.org/abs/1910.08752}},
 author={G. Y. Katona},
 author={K. Varga},
}

\bib{recognize_toughness_bipartite}{article}{
 title={Toughness, hamiltonicity and split graphs},
 author={D.~Kratsch},
 author={J.~Lehel},
 author={H.~Müller},
 journal={Discrete Mathematics},
 volume={150},
 pages={231-245},
 date={1996},
}

\bib{lovasz}{book}{
 title={Combinatorial problems and exercises},
 author={L.~Lov\'asz},
 date={2007},
 publisher={AMS Chelsea Publishing},
 address={Providence, Rhode Island},
}

\bib{lovasz_matching}{book}{
 title={Matching Theory},
 author={L.~Lov\'asz},
 author={M.~D.~Plummer},
 date={1986},
 publisher={Annals of Discrete Mathematics, Volume 29},
 address={North-Holland, Amsterdam},
}

\bib{ende}{article}{
 title={Eine Eigenschaft der Atome endlicher Graphen},
 author={W. Mader},
 journal={Archiv der Mathematik},
 volume={22},
 pages={333--336},
 date={1971},
}

\bib{clawfree}{article}{
 title={Hamiltonian results in $K_{1,3}$-free graphs},
 author={M. M. Matthews},
 author={D. P. Sumner},
 journal={Journal of Graph Theory},
 volume={8},
 pages={139--146},
 date={1984},
}

\bib{dp_intro}{article}{
 title={The Complexity of Facets (and Some Facets of Complexity)},
 author={C.~H.~Papadimitriou},
 author={M.~Yannakakis},
 journal={Journal of Computer and System Sciences},
 volume={28},
 pages={244--259},
 date={1984},
}

\bib{crit_clique}{article}{
 title={The Complexity of Facets Resolved},
 author={C.~H.~Papadimitriou},
 author={D.~Wolfe},
 journal={Journal of Computer and System Sciences},
 volume={37},
 pages={2--13},
 date={1988},
}

\bib{split_general}{article}{
 title={The toughness of split graphs},
 author={G. J. Woeginger},
 journal={Discrete Mathematics},
 volume={190},
 pages={295--297},
 date={1998},
}
\end{biblist}
\end{bibdiv}

\end{document}